\renewcommand\footnotetextcopyrightpermission[1]{} 
\newsavebox{\@brx}
\newcommand{\llangle}[1][]{\savebox{\@brx}{\(\m@th{#1\langle}\)}%
  \mathopen{\copy\@brx\kern-0.5\wd\@brx\usebox{\@brx}}}
\newcommand{\rrangle}[1][]{\savebox{\@brx}{\(\m@th{#1\rangle}\)}%
  \mathclose{\copy\@brx\kern-0.5\wd\@brx\usebox{\@brx}}}
\begin{document}
\title{Formal Verification of Debates in Argumentation Theory}


\author{Ria Jha}
\affiliation{%
  \institution{Imperial College London, UK}
}
\email{ria.jha15@imperial.ac.uk}

\author{Francesco Belardinelli}
\affiliation{%
  \institution{Imperial College London, UK \& Universit\'e d'Evry, France}
}
\email{fbelard@ic.ac.uk}
\author{Francesca Toni}
\affiliation{%
  \institution{Imperial College London, UK}
}
\email{f.toni@imperial.ac.uk}

\begin{abstract}
Humans engage in informal debates on a daily basis. By expressing
their opinions and ideas in an argumentative fashion, they are able to
gain a deeper understanding of a given problem and in some cases, find the
best possible course of actions towards resolving it.
In this paper, we develop a methodology to
verify debates formalised as abstract argumentation frameworks. We first present a translation from debates to
transition systems. Such transition systems can model debates and
represent their evolution over time using a finite set of states.
We then formalise relevant debate properties using temporal and
strategy logics. These formalisations, along with a debate transition
system, allow us to verify whether a given debate satisfies
certain properties. The verification process can be automated using
model checkers. Therefore, we also measure their performance
when verifying debates, and use the results to discuss the
feasibility of model checking debates.
\end{abstract}
%
%
  \begin{CCSXML}
<ccs2012>
<concept>
<concept_id>10003752.10003790.10003793</concept_id>
<concept_desc>Theory of computation~Modal and temporal logics</concept_desc>
<concept_significance>500</concept_significance>
</concept>
<concept>
<concept_id>10003752.10003790.10011192</concept_id>
<concept_desc>Theory of computation~Verification by model checking</concept_desc>
<concept_significance>500</concept_significance>
</concept>
<concept>
<concept_id>10010147.10010178.10010187.10010193</concept_id>
<concept_desc>Computing methodologies~Temporal reasoning</concept_desc>
<concept_significance>500</concept_significance>
</concept>
</ccs2012>
\end{CCSXML}

\ccsdesc[500]{Theory of computation~Modal and temporal logics}
\ccsdesc[500]{Theory of computation~Verification by model checking}
\ccsdesc[500]{Computing methodologies~Temporal reasoning}

\maketitle


  

\section{Introduction}
Humans engage in informal debates on a daily basis. By expressing
their opinions and ideas in an argumentative fashion, they are able to
gain a deeper understanding of a given problem and, in some cases,
find the best possible course of actions towards resolving it.

Verifying a debate gives us the ability to learn about its overall
outcome. It allows us to reason about the strategies available to a
participant as well as to determine the acceptability of some argument
they have made. Given a debate, we can verify relevant properties such
as ``the proponent will be able to refute any attack from the
opponent'' or ``the proponent has a strategy such that it will
eventually win the debate''.

A key technique for verifying properties of debates is model
checking. Model checking is a verification technique that has been developed over the last thirty years \cite{baier2008principles,ClarkeGrumbergPeled99}. This process includes developing and examining the full model of a system to ensure that it
satisfies a specific property, which is normally expressed in some
logic-based
language.  Typically, this procedure is completely automated.
While there are model checking tools and techniques that have been
developed for verification of properties of general transition systems
\cite{Cimatti+02b,Lomuscio2017}, to the best of
our knowledge, methodologies for the formal
verification of debates have not yet been considered .

In this paper, we develop a novel methodology to verify debates
formalised as abstract argumentation frameworks. We first present a
translation from debates to transition systems. Such transition
systems can model debates and represent their evolution over time
using a finite set of states.
We then formalise relevant debate properties using various flavours of
temporal logics. These formalisations, along with a debate transition
system, allow us to verify whether a given debate is able to satisfy
certain properties. The verification process can be automated using
model checkers. Thus, we also measure the performance of model
checkers when verifying debates, and use the results to discuss the
feasibility of model checking debates.

\textbf{Related work.}  On the formal analysis of debates in
Argumentation Theory, \cite{modgil2009proof} and
\cite{dung2007computing} consider debates between two agents, the
proponent and the opponent. In \cite{modgil2009proof} a debate between
the two agents is formalised as a turn-based game with the purpose of
determining the acceptability of the initial argument put forward by
the proponent. An agent may put forward all the arguments that can be
made legally according to the rules of the game, in order to refute
its counterpart’s argument.  The proponent winning such a game, where
all arguments have been made legally, indicates that its initial
argument is acceptable. The debates introduced in
\cite{dung2007computing} can be used for the same purpose. Unlike
\cite{modgil2009proof}, the proponent must behave deterministically
and only select a single argument to refute its counterpart. Moreover,
\cite{dung2007computing} introduces several debate properties which, if
satisfied, indicate the initial argument's acceptability.

In this paper, we present a translation for the debates in
\cite{modgil2009proof} into transition systems. We then formalise
certain debate properties \cite{dung2007computing} so that they can be
interpreted on these transition systems. These formalisations take
into consideration the notion of a deterministic proponent, as
introduced in \cite{dung2007computing}. This enables us to reason
about the different strategies available to the proponent in a debate.

There has not been much work done in the development of formal
methods for the verification of debates. \cite{belardinelli2015formal} models debates between
agents, where each agent holds private, possibly infinite, argumtentation frameworks. Thus, agents are able to exchange arguments and build a public
framework. Then, \cite{belardinelli2015formal} looks at formally
expressing and verifying relevant properties of abstract argumentation
frameworks.
However, it does not consider the notion of dispute trees nor acceptability conditions, as we do in
this paper.

Other works include \cite{GrossiD.2010Otlo}, where abstract
argumentation frameworks are treated as Kripke frames \cite{Fagin+95b}. \cite{GrossiD.2010Otlo} then uses modal logic to formalise notions
of argumentation theory including conflict-freeness and admissibility.
However, as with \cite{belardinelli2015formal}, no attempts are made to formally verify the acceptability of arguments.

\textbf{Structure of the Paper.} In Sec.~\ref{absarg} we present the basics of
Argumentation Theory \cite{dung1995acceptability}, including the
various semantics, as well as the notion of dispute tree introduced in
\cite{modgil2009proof}; while Sec.~\ref{modcheck} is devoted to the
preliminaries on model checking temporal and strategy logics.  In
Sec.~\ref{Sec3} we define a translation from debates in Argumentation
Theory \cite{modgil2009proof} to Interpreted Systems \cite{Fagin+95b}, and in Sec.~\ref{Sec4}
we formalise various winning conditions as formulas in Strategy Logic
\cite{MogaveroF.2014RasO}. Finally, in Sec.~\ref{Sec5} we evaluate the
performance of our approach against state-of-the-art argumentation
reasoners. We conclude in Sec.~\ref{conc} and discuss directions for
future work.

%

\section{Preliminaries}

In this section we present preliminary materials on abstract
argumentation (Sec.~\ref{absarg}) and verification by model checking
(Sec.~\ref{modcheck}), which will be used in the rest of the paper.

\subsection{Abstract Argumentation} \label{absarg}

To provide debates with a formal vest, we consider \textit{abstract
argumentation frameworks} as introduced
in \cite{dung1995acceptability}.
\begin{definition}[\textbf{AF}] An \textit{(abstract) argumentation framework} is a pair $\mathcal{AF}= \langle Args, Att \rangle$ where: 
\begin{itemize}
    \item \textit{Args} is a set of arguments $a, b, c, \ldots$.

The
    internal structure of each argument $a \in Args$ is abstracted and
    the arguments are perceived as atomic
    entities.

\item $Att \subseteq Args \times Args$ is a binary (attack) relation
    on $Args$.
\end{itemize}
\end{definition}

Given an argumentation framework, it is possible to compute the
``acceptable'' sets of arguments, referred to
as \textit{extensions}. Notably, \cite{dung1995acceptability,
dung2007computing} present several extension-based semantics. An
extension $E \subseteq Args$ computed under a particular semantics
must fulfil some specific criteria. We start with the notion of
acceptable argument.

\begin{definition}[\textbf{Acceptability}]
An argument $a \in Args$ is \textit{acceptable} with respect to set
$E \subseteq Args$ iff for each argument $ b \in Args$ that attacks
$a$, there is some argument $c \in E$ that attacks $b$. In other
words, \textit{a} is defended by \textit{E}.
\end{definition}
In Table~\ref{semantics} we
present the different semantics and the criteria that a given set $
E \subseteq Args$ must fulfill under each of
them \cite{dung1995acceptability, dung2007computing}.
\begin{table}
  \begin{tabular}{|lcl|}
   \hline
 \textbf{Semantics} & & \textbf{Criteria} \\
 \hline
 conflict-free  & iff &
 there are no
$a, b \in E$ such that $(a,b) \in Att$\\
 \hline
 admissible & iff &  it is conflict-free\\
 & and &  every argument in $E$ is acceptable w.r.t.~$E$.\\
 \hline
 complete & iff & it is admissible \\
          & and & it contains all arguments acceptable w.r.t.~$E$\\
 \hline
          grounded    & iff &  it is complete\\
          & and & minimal w.r.t.~set inclusion \\ 
 \hline
          preferred   & iff &  it is admissible\\
          & and & maximal w.r.t.~set inclusion \\ 
 \hline
          ideal      & iff & it is admissible\\
          & and & a subset of every preferred extension\\
 \hline
  \end{tabular}
  \caption{Semantics of argumentation frameworks. \label{semantics}}
\end{table}

\begin{example}
\label{ex: aaf}
As an example of an argumentation framework, consider
$\mathcal{AF}$ =
$\langle$\{\textit{a,b,c,d,e,f,g}\}, \textit{Att}$\rangle$\
with \textit{Att} = \{(\textit{a, b}), (\textit{b, a}),
(\textit{b, b}), (\textit{d, c}), (\textit{e, f}), (\textit{f, e}),
(\textit{f, d}), (\textit{g, f})\}.

An argumentation framework can also be depicted as a graph with nodes
and directed edges. The nodes represent the arguments and the directed
edges represent an attack from one argument to another. See
Fig.~\ref{fig:aaf} for a representation of framework
$\mathcal{AF}$. Given this framework, we can compute the acceptable sets
of arguments under the different semantics:
\begin{itemize}
 \item The complete extensions are \{\textit{c, e, g}\}, \{\textit{a, c, e, g}\}.
 \item The grounded extension is \{\textit{c, e, g}\}.
    \item The preferred extension is \{\textit{a, c, e, g}\}.
    \item The ideal extension is \{\textit{a, c, e, g}\}.
\end{itemize}
\end{example}

\begin{figure}[H]
    \centering
    \begin{tikzpicture}[state/.style={circle, draw, minimum size=0.7cm}, scale=0.8]
    \node[state] (a) at (0,0) {$a$};
    \node[state] (b) at (0,-2) {$b$};
    \node[state] (c) at (2,0) {$c$};
    \node[state] (d) at (4,0) {$d$};
    \node[state] (e) at (6,0) {$e$};
    \node[state] (f) at (6,-2) {$f$};
    \node[state] (g) at (8, -2) {$g$};
    \path [->] (a.280) edge (b.80);
    \path [->] (b.100) edge (a.260);
    \path [->] (b) edge [loop right] (b);
    \path [->] (d) edge (c);
    \path [->] (e) edge (d);
    \path [->] (f) edge (d);
    \path [->] (e.280) edge (f.80);
    \path [->] (f.100) edge (e.260);
    \path [->] (g) edge (f);
\end{tikzpicture}
    \caption{The abstract argumentation framework $\mathcal{AF}$ in Example~\ref{ex: aaf}.}
    \label{fig:aaf}
\end{figure}
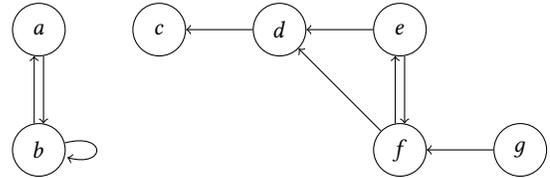

\label{sec:disp_tree}
The acceptability of some argument $a$ under the grounded, preferred,
or ideal semantics can be determined by constructing a \textit{dispute
tree}, where $a$ is the root argument \cite{modgil2009proof}. Each
branch in a dispute tree is a sequence of attacking arguments, which are put forward by the proponent and the opponent in a turn based fashion. A branch in a dispute tree is finite when the last played argument is unattacked. Otherwise, the branch may be infinitely long.

\begin{definition}[\textbf{Dispute Tree}]
\label{def:disp_gen}
Given an argumentation framework $\mathcal{AF} =
\langle Args, Att\rangle$ and an argument $a \in
Args$, a \textit{dispute tree} $T$ induced by $a$ is a tree
of arguments where
\begin{enumerate}
    \item The root node of \textit{T} is \textit{a}, played by the proponent.
    \item For all $x, y \in Args$, \textit{x} is a child of \textit{y} iff $(x, y) \in Att$.
    \item Each node is labelled either \textbf{P} or \textbf{O}, for the proponent or the opponent, indicating the player that put forward the argument. 
\end{enumerate}
\end{definition}

\begin{example}
Consider the argumentation framework $\mathcal{AF}$ depicted in Fig.~\ref{fig:aaf}. The argument $c$ induces a single dispute tree using Def.~\ref{def:disp_gen}, which is shown in Fig.~\ref{fig:first_disp_tree}.  
\end{example}
\begin{figure}
    \centering
     \begin{tikzpicture}[level distance=1.05cm,
  level 1/.style={sibling distance=3cm},
  level 2/.style={sibling distance=1.8cm},
  level 3/.style={sibling distance=1cm},grow'=right]
  \node {\textbf{P:}  \textit{c}}
    child {node {\textbf{O:}  \textit{d}}
      child {node {\textbf{P:}  \textit{e}}
            child{node {\textbf{O:}  \textit{f}}
                child{
                node{\textbf{P:}  \textit{e}}
                    child{node{}edge from parent[dashed]}}
                child{node{\textbf{P:}  \textit{g}}}}}
      child {node {\textbf{P:}  \textit{f}}
      child{node{\textbf{O:}  \textit{g}}}
      child{
                node{\textbf{O:}  \textit{e}}
                    child{node{\textbf{P:}  \textit{f}}
                    child{node{}edge from parent[dashed]}
                    child{node{}edge from parent[dashed]}}
                    }
                    }
    };
\end{tikzpicture}
    \caption{The dispute tree induced by argument $c$, generated using
    $\mathcal{AF}$ in Fig.~\ref{fig:aaf}.}  \label{fig:first_disp_tree}
\end{figure}
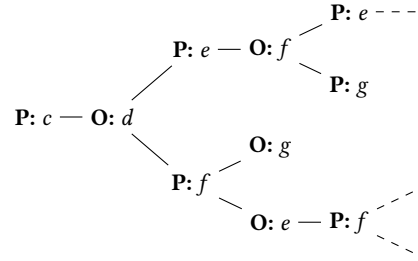
Within such a dispute tree, we can find subtrees that are the result
of an agent applying one of its \textit{strategies}. 
\begin{definition}[\textbf{Strategy}]
Given $\mathcal{AF} =
\langle Args, Att\rangle$, a \textit{strategy}
$\sigma: Args \rightharpoonup Args$ is a partial function that maps an
argument $x \in Args$ to one of its attackers, if there are
any.
\end{definition}

In a dispute tree $T$, induced by argument $a$, the
proponent can apply some strategy $\sigma$ from $a$ to obtain a subtree
$T_\sigma$ where:
\begin{itemize}
    \item The root argument of $T_\sigma$ is $a$.
    \item For any argument $i$ played by the opponent, the proponent must play at most one argument, $\sigma(i)$.
\end{itemize}

To justify \textit{a}'s membership to some extension, the proponent
must have a \textit{winning strategy} $\sigma$. By applying $\sigma$
from $a$, the proponent can guarantee acceptability of $a$ under a
particular semantics.

Definitions \ref{def:grd_win}, \ref{def:adm_win},
and \ref{def:ideal_win} define the grounded, admissible, and ideal
winning strategies, respectively \cite{dung2007computing}. For these
definitions, consider a dispute tree $T$, induced by some argument
$a$.

\begin{definition}[\textbf{Grounded Winning Strategy}]
\label{def:grd_win}
A strategy $\sigma$ applied at root $a$ is a {\em grounded winning strategy} iff:
\begin{itemize}
        \item Every opponent node in $T_\sigma$ has exactly one child.
        \item Every dispute within $T_\sigma$ is finite. 
\end{itemize}
\end{definition}
To justify membership of $a$ in a preferred extension, it suffices to show that $a$ belongs to some admissible extension \cite{modgil2009proof}.
\begin{definition}[\textbf{Admissible Winning Strategy}]
\label{def:adm_win}
A strategy $\sigma$ applied at root $a$ is an {\em admissible winning strategy} iff:
\begin{itemize}
        \item Every opponent node in $T_\sigma$ has exactly one child.
        \item No argument is labelled by both the proponent and the opponent. 
\end{itemize}
\end{definition}
\begin{definition}[\textbf{Ideal Winning Strategy}]
\label{def:ideal_win}

A strategy $\sigma$ applied at root $a$ is an {\em  ideal winning strategy} iff:
\begin{itemize}
    \item $\sigma$ is an admissible winning strategy.
    \item There does not exist an admissible winning strategy $\sigma'$ for any argument $b \in Opp(T_\sigma)$, which is the set of all arguments played by the opponent in $T_\sigma$.
\end{itemize}
\end{definition}
\begin{example}
\label{ex:disp_two}

Consider the argumentation framework in Fig.~\ref{fig:aaf} and the
dispute tree $T$ in
Fig.~\ref{fig:first_disp_tree}, which
shows that the proponent has a choice when attacking arguments $d$ and
$f$. As a result, there are three strategies available to the
proponent: $\sigma_1, \sigma_2$, and $\sigma_3$. By applying each of
these strategies from $c$, we obtain the subtrees $T_{\sigma_1},
T_{\sigma_2}$, and $T_{\sigma_3}$, respectively. The dispute tree
$T_{\sigma_1}$ is depicted in Fig.~\ref{fig:disp_two_tree_one} whereas
$T_{\sigma_2}$ and $T_{\sigma_3}$ are depicted in
Fig.~\ref{fig:disp_two_tree_two} and \ref{fig:disp_two_tree_three},
respectively. We find that the dispute trees $T_{\sigma_1}$ and
$T_{\sigma_3}$ satisfy the properties described in
Def.~\ref{def:adm_win} and therefore $\sigma_1$, and $\sigma_3$
are both admissible winning strategies. Whereas, only $T_{\sigma_3}$
satisfies the properties in Def.~\ref{def:grd_win} and
so $\sigma_3$ is the only grounded winning strategy. Finally,
according to Def.~\ref{def:ideal_win}, $\sigma_1$ and $\sigma_3$
are also ideal winning strategies. 
\end{example}
\begin{figure}
    \centering
     \begin{tikzpicture}[level distance=1.05cm,
  level 1/.style={sibling distance=3cm},
  level 2/.style={sibling distance=1.8cm},
  level 3/.style={sibling distance=1cm}, grow'=right]
  \node {\textbf{P:}  \textit{c}}
    child {node {\textbf{O:}  \textit{d}}
      child {node {\textbf{P:}  \textit{e}}
            child{node {\textbf{O:}  \textit{f}}
                child{node{\textbf{P:}  \textit{e}}
                child{node{}edge from parent[dashed]}}}}
    };
\end{tikzpicture}
    \caption{The sub dispute tree $T_{\sigma_1}$ induced by $c$ in Ex.~\ref{ex:disp_two}.}
    \label{fig:disp_two_tree_one}
    \centering
    \begin{tikzpicture}[level distance=1.05cm,
  level 1/.style={sibling distance=3cm},
  level 2/.style={sibling distance=1.8cm},
  level 3/.style={sibling distance=1cm}, grow'=right]
  \node {\textbf{P:}  \textit{c}}
    child {node {\textbf{O:}  \textit{d}}
      child {node {\textbf{P:}  \textit{f}}
      child{node{\textbf{O:}  \textit{g}}}
      child{
                node{\textbf{O:}  \textit{e}}
                    child{node{\textbf{P:}  \textit{f}}
                    child{node{}edge from parent[dashed]}
                    child{node{}edge from parent[dashed]}}
                    }
                    }
    };
    \end{tikzpicture}
    \caption{The sub dispute tree $T_{\sigma_2}$ induced by $c$ in Ex.~\ref{ex:disp_two}.}
    \label{fig:disp_two_tree_two}
    \centering
    \begin{tikzpicture}[level distance=1.05cm,
  level 1/.style={sibling distance=3cm},
  level 2/.style={sibling distance=1.8cm},
  level 3/.style={sibling distance=1cm}, grow'=right]
  \node {\textbf{P:}  \textit{c}}
    child {node {\textbf{O:}  \textit{d}}
      child {node {\textbf{P:}  \textit{e}}
            child{node {\textbf{O:}  \textit{f}}
                child{node{\textbf{P:}  \textit{g}}
              }}}
    };
\end{tikzpicture}
    \caption{The sub dispute tree $T_{\sigma_3}$ induced by $c$ in Ex.~\ref{ex:disp_two}.}
    \label{fig:disp_two_tree_three}
\end{figure}


\subsection{Verification by Model Checking} \label{modcheck}


An interpreted system is a transition system that allows us to reason
about the behaviour
and strategies of agents in a multi agent system  \cite{lomuscio2006model}.
Given a set \textit{$\Sigma$ =} \{1,.., \textit{n}\} of agents  and a special agent \textit{Env} for the environment, let $Ag$ = $\Sigma \cup \{Env\}$.
\begin{definition}[\textbf{Interpreted System}]
\label{def:int_sys}
 An \textit{interpreted system} is a tuple $IS = \langle (L_i, Act_i,
 P_i, t_i)_{i \in Ag}, I, h\rangle$ where:
\begin{itemize}
    \item For each agent $i \in \Sigma$, $L_i$ is the set of all possible local states of $i$. The local state of each agent $i \in \Sigma$ is private. For agent \textit{Env}, its set of local states is referred to as $L_{Env}$ and may be observed by the other agents. We refer to the local states of the agents in $Ag$ collectively as a global state $g \in L_1 \times \ldots \times L_n \times L_{Env}$.
    
    \item For each agent $i \in Ag$, $Act_i$ is the set of all actions available to $i$. The set $Act =  Act_1 \times \ldots \times Act_n \times Act_{Env}$ refers to the joint actions for all the agents in $Ag$.
    
    \item Protocol \textit{P$_i$} : \textit{L$_i \times L_{Env}$ $\rightarrow$} $2^{Act_i}$ is a function for an agent $i \in \Sigma$, which takes the local states of \textit{i} and \textit{Env} as its input and return all actions for \textit{i} that are enabled at the given state.
    
    \item $t_i : L_i \times L_{Env} \times Act \rightarrow L_i$ is a
    deterministic transition function for agent $i \in \Sigma$, which
    describes the evolution of \textit{i}’s local state. Using the local states of \textit{i} and \textit{Env}, along with the
    enabled actions of all the agents, it returns the ``next'' local
    state for agent \textit{i}. In particular, $t_i(l_i, l_{Env},\alpha)$ is defined iff $\alpha_i \in P_i(l_i, l_{Env})$.
Similarly, the transition function for the environment \textit{Env} is
    $t_{Env} : L_{Env} \times Act \rightarrow L_{Env}$. A global
    transition function $t: G \times Act \rightarrow G$, where $G \subseteq L_1 \times \ldots \times L_n \times L_{Env}$ refers to the set of \textit{reachable} global states, combines the
    output of the local transition functions and provides the next
    accessible global state of the system. In particular, $t(g,\alpha)
    = \prod_{i \in Ag} t_i(g_i, g_{Env}, \alpha)$.
    
    \item $I \subseteq L_1 \times \ldots \times L_n \times L_{Env}$
     is the initial set of global states. The
    set $G \subseteq L_1 \times \ldots \times L_n \times L_{Env}$ refers to all the global states that
    are \textit{reachable} from $I$ through the transition
    function $t$.
    
    \item$h: AP \rightarrow 2^G$, where $AP$ is a set of atomic propositions, is a valuation function.
\end{itemize}
\end{definition}

Strategy Logic (SL) is a logic used to reason about strategies of agents in multi agent systems \cite{MogaveroF.2014RasO}.
Consider an interpreted system $IS$, its agents $Ag$, and fix a set
$Var$ of variables for strategies. Each variable $x_i \in Var$ is
typed according to an agent $i \in Ag$.
\begin{definition}[\textbf{SL syntax}]
\label{def: sl_syntax} The syntax of SL is presented below:
\begin{center}
$\phi \Coloneqq$ \textit{ p } $\mid$ $\neg \phi \mid  \phi \lor \phi \mid X \phi \mid G \phi \mid \phi U \phi \mid $ $\exists x_i \phi $
\end{center}
\end{definition}
The strategy quantifier $\exists x_i$ is read as ``there
exists a strategy $x_{i}$'' while $\forall x_{i}
::= \neg \exists x_i \neg$ can be read as ``for all
strategies $x_{i}$''.  The intuitive meaning of the linear-time
operators is standard \cite{baier2008principles}: $X \phi$ is read as
``at the next moment $\phi$ holds'', $G \phi$ as ``always $\phi$'',
and $\phi U \phi'$ as ``$\phi$ until $\phi'$. The operator $F \phi$ is read as ``$\phi$ eventually holds'', which may be equivalently expressed as $\top U \phi$.
In interpreted systems, each agent $i \in \Sigma$ is assumed to have
its own strategy $f_i: L_i \times L_{Env} \rightarrow Act_i$ such that
$f_i(l_i,l_e) \in P_i(l_i,l_e)$. Similarly, \textit{Env} can also have
a strategy $f_{Env}: L_{Env} \rightarrow Act_{Env}$ to determine its
next action. $Str_i$ refers to the set of strategies available to
agent $i \in Ag$. We can use an {\em assignment} to associate a
variable $x_{i} \in Var$, for $i \in Ag$, to one of its strategies
$f_i$. An assignment $\chi: Var \to Str$, for $Str = \bigcup_{i \in
Ag} Str_i$, is a function mapping
each variable $x_{i} \in Var$ to a strategy $f_i \in Str_i$.
Given $\chi$, $x_{i} \in Var$,
and $i$'s strategy
$f_i$, $\chi [x_{i} \mapsto f_i]$
denotes the new
assignments on $dom(\chi) \cup \{x_{i}\}$.

From an assignment $\chi$ and a global state $s$ in
$IS$, we can obtain an outgoing path $\lambda$ from $s$, known as a
$play$. A $play$ is the unique possible outcome obtained from each agent
$i \in Ag$, applying the strategy that they are assigned to
in $\chi$, from $s$. We
use $\chi$-$s$ to denote a play from a global state
$s$, with respect to $\chi$.
\begin{definition}[\textbf{SL semantics}] Given $IS$, its current global state $s$, an assignment \raisebox{1.25pt}{$\chi$}, an SL formula $\phi$,
the semantics of SL are given below. Note that for a play $\lambda$,
$\lambda[i]$ refers to the global state occurring in the $i$-th
position in $\lambda$, for $i \geq 0$.
\begin{tabbing}
$(IS, \raisebox{1.25pt}{$\chi$}, s) \models p$ \ \ \ \ \ \ \ \ \ \ \ \ \ \= iff \ \ \= $s \in h(p)$\\
$(IS, \raisebox{1.25pt}{$\chi$}, s) \models \neg \phi$ \> iff \> $(IS, \raisebox{1.25pt}{$\chi$}, s) \not \models \phi$.\\
$(IS, \raisebox{1.25pt}{$\chi$}, s) \models \phi_1 \lor \phi_2$ \> iff \>  $(IS, \raisebox{1.25pt}{$\chi$}, s) \models \phi_1$ or $(IS, \raisebox{1.25pt}{$\chi$}, s) \models \phi_2$.\\
$(IS, \raisebox{1.25pt}{$\chi$}, s) \models \exists
x_{i} \phi$ \> iff \> for some strategy $f_i \in Str_i$,\\
\> \> $(IS, \chi[x_{i} \mapsto f_i], s) \models \phi$\\
$(IS, \raisebox{1.25pt}{$\chi$}, s) \models X\phi$ \> iff \> $(IS, \raisebox{1.25pt}{$\chi$}, \lambda[1]) \models \phi$\\
$(IS, \raisebox{1.25pt}{$\chi$}, s) \models G\phi$ \> iff \> for all $i \geq 0$, $(IS, \raisebox{1.25pt}{$\chi$}, \lambda[i]) \models \phi$\\
$(IS, \raisebox{1.25pt}{$\chi$}, s) \models \phi_1 U \phi_2$ \> iff \>
for some $i \geq 0$,
$(IS, \raisebox{1.25pt}{$\chi$}, \lambda[i]) \models
\phi_2$ and\\
\> \> for all $0 \leq j < i$, $(IS, \raisebox{1.25pt}{$\chi$}, \lambda[j]) \models \phi_1$
\end{tabbing}
\end{definition}
A sentence $\phi$ is satisfied in $IS$ if
for all initial global states $g \in I$, $(IS, \emptyset,
i) \models \phi$, where $\emptyset$ refers to an empty assignment.

\label{logic: ATL}
Alternating-Time Temporal Logic (ATL) is a branching time temporal
logic that is used for strategic reasoning in a multi agent
system \cite{Alur:2002:ATL:585265.585270}.
ATL can be seen as a syntactic fragment of SL, where formulas
$\llangle A \rrangle \phi$, for $A = \{i_1, \ldots, i_k\} \subseteq
Ag$, is defined as $\exists x_1 \ldots \exists x_{k} \forall
x_{k+1} \allowbreak \ldots \forall x_{n} \phi$:
``the coalition $A$ of agents has a joint strategy to enforce some
property $\phi$, regardless of the behaviour of the other agents in
$Ag \setminus A$''.

Computation Tree Logic (CTL) is another branching-time temporal logic
that is strictly less expressive than
ATL \cite{baier2008principles}. CTL can be considered as a syntactic
fragment of ATL as it only allows quantification either over all or
over a single path in an interpreted system. Within CTL, the
quantifier $A$ refers to all paths in the system, and is equivalent to
path quantifier $\llangle \emptyset \rrangle$ in ATL. The
quantifier $E$ refers to the existence of an individual path in $IS$
and is equivalent to the ATL quantifier $\llangle Ag \rrangle$, where
$Ag$ is the set of all agents within the interpreted system.

Given a finite-state model \textit{IS} of a system and a formal
property $\phi$ that the system must satisfy, \textit{model checking}
is an automated verification technique that can be used to check
whether $IS \models \phi$ \cite{baier2008principles}, where
$IS \models \phi$ iff for all $g \in I$, $(IS, g) \models \phi$.
There are several model checking tools available that can be used for
the verification of multi-agent systems, including
MCMAS \cite{Lomuscio2017, mcmas_manual}, which is a model checker
tailored on interpreted systems.  A variant of MCMAS,
MCMAS-SLK \cite{CermakP2014MAmc}, will be used in the experimental
evaluation in Sec.~\ref{Sec5}.


\section{Constructing Interpretation Systems from Debates}
\label{Sec3}
In order to verify a debate using model checking, a transition system
representing the debate is required. The transition system should be
able to represent the debate's state at all points in time and also
represent its evolution. Once the transition system has been
generated, it can then be used to verify certain properties of the
debate.

\label{sec:fst_appch}

This section provides a translation for a dispute tree $T$, as
described in Def.~ \ref{def:disp_gen}, to an interpreted system $IS_T$
as per Def.~\ref{def:int_sys}. An advantage of this approach is that
we can use a single interpreted system $IS_T$ to verify the existence
of a winning strategy $\sigma$ in $T$ for the proponent, under
multiple semantics. We can do this in one step by formally expressing
the properties of the subtree $T_\sigma$ under the different
semantics, and checking whether $IS_T$ satisfies them.

In order to construct $IS_T$, we require an abstract argumentation
framework $\mathcal{AF}= \langle Args, Att \rangle$, where $Args$ is
finite. Furthermore, we require the dispute tree $T$, induced by some
argument $a \in Args$, whose acceptability we would like to determine.
\begin{definition}[\textbf{The Interpreted System $IS_T$}] \label{IST}
Given an abstract argumentation framework $\mathcal{AF}= \langle Args,
Att \rangle$ and corresponding dispute tree $T$ starting in argument
$a \in Args$, the interpreted system $IS_T = \langle (L_i, Act_i, P_i,
t_i)_{i \in Ag_T}, I_T, h_T \rangle$ is comprised of:
\begin{itemize}
  \item Set $Ag_T = \{ Pro, Opp, Env \}$ of agent including the proponent \textit{Pro}, the opponent \textit{Opp}, and the environment, \textit{Env}.

\item For every agent in $Ag_T$, the set of local states are:
\begin{itemize}
    \item $L_{Pro}$ = \textit{Args};
    \item $L_{Opp} = Args \cup  \left\{empty\right\}$;
    \item $L_{Env} = \{Pro, Opp\} \times Args \times AttSeen$, where $AttSeen \subseteq 2^{Att}$
\end{itemize}

The unique initial state of the system is $(a, empty, (Opp, a,
\emptyset))$, indicating that argument $a$ has been put forward by
\textit{Pro}. Therefore, $I_T = \{(a, empty, (Opp, a, \emptyset))\}$.

\item The set of actions available to agents \textit{Pro} and \textit{Opp} is $Act_{Pro}$ = $Act_{Opp}$ =  $\{attack_{xy} \mid (x,y) \in Att\} \cup \{nothing\}$. Whereas, the set of actions available to \textit{Env} is $Act_{Env}$ = $\{nothing\}$.

\item The protocol functions for \textit{Pro} and \textit{Opp} determine the
  available actions that the agents may perform, based on their current local state and the local state of environment \textit{Env}:
\[
    P_{Pro}(\_, \textit{(turn, last, \_)})= 
\begin{cases}
    attacks(last), & \text{if $turn = Pro$ and} \\ & \text{\textit{attacks}(\textit{last})$\ne \emptyset$} \\
    \{nothing\} & \text{otherwise}
\end{cases}
\] where \textit{attacks(y)} = $\{attack_{xy} \mid attack_{xy} \in Act_{Pro}\}$ and $\_$
indicates that the input provided is not used in the function.

Symmetrically,
\[
    P_{Opp}(\_, \textit{(turn, last, \_)})= 
\begin{cases}
    attacks(last), & \text{if $turn = Opp$ and} \\ & \text{\textit{attacks}(\textit{last})$\ne \emptyset$} \\
    \{nothing\} & \text{otherwise}
\end{cases}
\] where \textit{attacks(y)} = $\{attack_{xy} \mid attack_{xy} \in Act_{Opp}\}$.

Unlike the standard agents, \textit{Env} only has one
action available:
$P_{Env}(\_, \_, \_)$ = \{\textit{nothing}\}.

\item The transition functions determine the evolution of the debate, based on the agents' actions and their current local state:
\begin{itemize}
    \item $t_{Pro}(\textunderscore, (\textit{Pro, y, \textunderscore}), (attack_{xy}, nothing, nothing)) = \textit{x}$
    \item $t_{Opp}(\textunderscore, (\textit{Opp, y, \textunderscore}), (nothing, attack_{xy}, nothing)) = \textit{x}$
    \item $t_{Env}$((\textit{Pro, y, attacksSeen}), ($attack_{xy}, nothing, nothing$)) = (\textit{Opp, x,} $attacksSeen \cup$\{($x, y$)\})
    \item $t_{Env}$((\textit{Opp, y, attacksSeen}), ($nothing, attack_{xy}, nothing$))= (\textit{Pro, x,} $attacksSeen \cup$\{($x, y$)\})
\end{itemize}

\item For each argument $b \in Args$, we consider two propositional
  atoms, $Pro_b$ and $Opp_b$. The global states $g \in G$ of $IS_T$ at
  which these atoms hold are described by the valuation function
  $h_T$:
\begin{itemize}
    \item $h_T(Pro_x) = \{g \in G \mid l_{Pro}(g) = x$  and $l_{Env}(g)$ is a tuple of the form $(Opp, x, \_)\}$
    \item $h_T(Opp_x) = \{g \in G \mid l_{Opp}(g) = x$  and $l_{Env}(g)$ is a tuple of the form $(Pro, x, \_)\}$
\end{itemize}
 where $l_{Pro}$, $l_{Opp}$, and $l_{Env}$ are functions that return the local states of \textit{Pro}, \textit{Opp} and \textit{Env} respectively from a given global state $g \in G$.
\end{itemize}
\end{definition}


The interpreted system $IS_T$, for a dispute tree $T$, evolves as each
player selects an argument to play. The local states for proponent
\textit{Pro} and opponent \textit{Opp} therefore store their most
recently played argument. Both agents may select any argument in
the set $Args$.
The root argument of $T$, which is played by the proponent, is the
initial local state of \textit{Pro}; whereas the initial local state
of \textit{Opp} is \textit{empty}.  This state takes into
consideration the case where \textit{Opp} has not yet put forward an
argument, while \textit{Pro} has already played \textit{a}. The initial
state of \textit{Env} also indicates that \textit{a} has already been
played and \textit{Opp} must now find an argument attacking
\textit{a}.


At any given point in time, exactly one agent \textit{Pro} or
\textit{Opp} plays an argument. Thus, we use $Env$ to store
information about the agent's turn.
Moreover,
$Env$ also tracks the most recently
played argument in the dispute as well as the
attacks seen so far.



In \textit{T}, each player attacks their
counterpart’s argument made in a single dispute, with an argument of their
own.
\textit{Pro} and \textit{Opp} have actions for all attacks available
in the abstract argumentation framework $\mathcal{AF}$, which may be
enabled depending on their private local state along with
the information stored in the local state of \textit{Env}.

For each $(x, y) \in Att$, there is a corresponding
action $attack_{xy}$. The action \textit{nothing} is also available,
which an agent may select when it is their counterpart's turn or when
their counterpart's most recently played argument in the dispute is
unattacked. This is indicated in the protocol functions above, where
the only enabled action for an agent is \textit{nothing}, when it is
their counterpart's turn to play. Otherwise, the agent must find all
actions corresponding to an attack against the most recently played
argument, \textit{last}. This information is stored in the local state
of \textit{Env}, which \textit{Pro} and \textit{Opp} may observe.

%


The transition functions for \textit{Pro} and \textit{Opp} update
their private local states, to their most recently played argument, based on the action taken by all of the agents.
The states of \textit{Pro} and \textit{Opp} will not change if their chosen action is
\textit{nothing}.
Similarly, the local state of \textit{Env} is updated according to the
actions taken by the standard agents. All three variables stored by
\textit{Env} are updated simultaneously.

Finally, the valuation function above describes the states at which the
propositional atoms hold. These atoms indicate the most recently
played argument at a global state, and by which agent.



\label{sec:IST_val}

\section{Formalising Winning Conditions on Dispute Trees}
\label{Sec4}

In this section, we formalise the dispute tree properties in
Def.~\ref{def:grd_win}, \ref{def:adm_win}, and
\ref{def:ideal_win}. These formulae can be interpreted on debate
transition systems to verify the existence of a winning strategy
$\sigma$ under a particular semantics. 

Consider an abstract argumentation framework $\mathcal{AF}= \langle
Args,\allowbreak Attacks \rangle$ and an argument $a \in Args$. Let
$T$ be the dispute tree induced by $a$. By translating $T$ according
to Def.~\ref{IST}, we obtain the interpreted system $IS_T$ with
initial state $s_0$.
Further, $Var$ refers to a fixed set of variables with
$p, o, e \in Var$. Finally, the set of paths obtained by applying
strategy $f_i$, for some agent $i \in \{Pro, Opp, Env\}$, from state
$s$ in $IS_T$ is referred to as $out(s, \{f_i\})$.
\subsection{Grounded Winning Strategy} \label{theo:grd}

For verifying the existence of a grounded winning strategy $\sigma$ in
$T$ by using $IS_T$, ATL is used
to formalise the dispute tree properties in Def.~\ref{def:grd_win}.
\begin{theorem} \label{theor1}
The proponent has a grounded winning strategy $\sigma$ in $T$ iff $IS_T \models \llangle\{Pro\}\rrangle F \, \Big(\,\underset{i \in Args}{\bigvee} AG(Pro_i)\Big)$.
\end{theorem}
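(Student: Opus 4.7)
The plan is to prove the biconditional by establishing a tight correspondence between (i) proponent strategies $\sigma$ on the dispute tree $T$ and proponent strategies $f_{Pro}$ in $IS_T$, and (ii) branches of the subtree $T_\sigma$ and plays of $IS_T$ consistent with $f_{Pro}$. By the construction in Def.~\ref{IST}, a global state $g$ with $l_{Env}(g) = (Pro, y, \_)$ corresponds precisely to a node played by the opponent with last argument $y$ awaiting a proponent response, and symmetrically for $(Opp, y, \_)$. Hence a finite prefix of any play $\lambda$ starting at $s_0$ spells out a sequence of attacks that is exactly a finite prefix of some branch in $T$, and conversely every branch of $T$ is traced by some play of $IS_T$.

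The first key observation I would establish is that $AG(Pro_i)$ holds at a reachable state $g$ iff $l_{Pro}(g)=i$, $l_{Env}(g) = (Opp, i, \_)$, and $attacks(i)=\emptyset$ in $Att$. This follows from the opponent's protocol: when it is Opp's turn and the last argument has no attackers, the only enabled action for Opp is $nothing$, and by the transition functions no local state can change, so the atom $Pro_i$ is preserved on the unique outgoing path. Conversely, if $attacks(i)\ne\emptyset$, Opp is forced to attack, and the next state no longer satisfies $Pro_i$, so $AG(Pro_i)$ fails. Thus the inner disjunction $\bigvee_{i\in Args} AG(Pro_i)$ pinpoints exactly the absorbing ``Pro-wins-here'' states corresponding to leaves of a finite dispute where the proponent has just played an unattacked argument.

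For the forward direction, given a grounded winning strategy $\sigma$, I would define $f_{Pro}$ at any reachable state with $l_{Env}=(Pro,y,\_)$ by $f_{Pro}(\cdot, l_{Env}) = attack_{\sigma(y)\, y}$; this is well-defined because Def.~\ref{def:grd_win} guarantees $\sigma(y)$ is defined whenever $y$ is an opponent node in $T_\sigma$, and such $y$ are precisely the opponent-turn states reachable under $f_{Pro}$. Any play $\chi$-$s_0$ consistent with $f_{Pro}$ corresponds to a branch of $T_\sigma$; since every such branch is finite by assumption, the play eventually reaches a state where Pro has played an unattacked argument $i$, i.e.\ a witness state for $AG(Pro_i)$. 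Since this holds for every play (i.e.\ for every opponent strategy), the ATL semantics of $\llangle\{Pro\}\rrangle F$ is satisfied. For the converse, given $f_{Pro}$ witnessing the ATL formula, I would define $\sigma(y) = x$ when $f_{Pro}$ plays $attack_{xy}$ at the (unique) reachable state with $l_{Env}=(Pro,y,\_)$; every opponent node in $T_\sigma$ then has exactly one child, and finiteness of every dispute in $T_\sigma$ follows because every play under $f_{Pro}$ reaches an absorbing witness state after finitely many steps, which precisely bounds the length of the corresponding branch.

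The main obstacle, and the step that deserves the most care, is the bookkeeping around infinite plays versus infinite disputes: in $IS_T$ every play is infinite (traces always continue, even if by $nothing$-loops), so one must argue that an absorbing $Pro_i$-loop in $IS_T$ corresponds to a leaf in $T_\sigma$ rather than to an infinite dispute, while any non-absorbing infinite play projects to a genuinely infinite branch. Once this correspondence between ``eventually absorbing at a $Pro_i$ state'' and ``finite dispute ending with a Pro leaf'' is made precise, both implications fall out cleanly from the two structural bullets of Def.~\ref{def:grd_win}.
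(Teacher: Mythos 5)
Your proposal is correct and follows essentially the same route as the paper's proof: both rest on the dispute--play correspondence induced by Def.~\ref{IST}, identify the states satisfying $\bigvee_{i \in Args} AG(Pro_i)$ with absorbing states where the proponent has just played an unattacked argument, and translate strategies between $T$ and $IS_T$ in both directions. Your write-up is in places more explicit than the paper's (notably the characterisation of $AG(Pro_i)$ and the finite-dispute versus absorbing-play bookkeeping); the one inaccuracy is the claim that the reachable state with $l_{Env}=(Pro,y,\_)$ is unique --- several such states can differ in the $AttSeen$ (and $l_{Pro}$) components, so extracting a positional $\sigma\colon Args \rightharpoonup Args$ from an arbitrary witnessing $f_{Pro}$ needs an extra word (e.g.\ positional determinacy of reachability objectives), a point the paper's own proof also glosses over.
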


The formula in Theorem~\ref{theor1} states that $Pro$ has a strategy
$f$ from the initial state $s_0$ in $IS_T$, such that every path
$\lambda$ in
$out(s_0, \{f\})$ must have a steady state $s_k$ ($k \geq 0$) along
the path, where $Pro$ will play some argument $i \in Args$ that is
unattacked. We use the CTL universal quantifier $A$ instead of its ATL
equivalent $\llangle\emptyset\rrangle$ for readability.

\begin{proof}
Let $D_T$ refer to the set of disputes beginning at $a$ in the dispute
tree $T$, where each dispute is a sequence of attacking arguments. For
example, if $\delta \in D_T$ is an infinite sequence $a_0, a_1,\ldots$,
then for all $i \geq 0$, $(a_{i+1}, a_i) \in Att$. With an abuse
of notation, we write $a \in \delta$ to say that argument $a$ occurs
in $\delta$. Similarly, we write $s_i \in \lambda$, for $i \geq 0$,
to say that state $s_i$ occurs in path $\lambda$ in $IS_T$.

$\Longrightarrow$ Consider the set $D_{T_\sigma} \subseteq D_T$ of
disputes for a grounded winning strategy $\sigma$. Each dispute
$\delta \in D_{T_\sigma}$ corresponds to a path $\lambda = s_0,
s_1,\ldots$. Consequently, there exists a set $P$ of paths  from $s_0$ that
is associated with the set $D_{T_\sigma}$ of disputes.

Select an arbitrary path $\lambda \in P$. The dispute $\delta \in
D_{T_\sigma}$ corresponding to $\lambda$ must be finite as it occurs
in $T_\sigma$. This implies that there is a terminating argument
$y \in \delta$ labelled by the proponent. Accordingly, there must be a
state $s_k \in \lambda$, for $k \geq 0$, such that $(IS_T,
s_k) \models Pro_y$.

Then, select an arbitrary path $\lambda'$ from $s_k$ and a state
$s_{k'} \in \lambda'$, for $k' \geq k$. As $y$ is the winning
argument, there is no further action associated with an attack from
either agent. The dispute must have reached a steady state and
therefore, $(IS_T, s_{k'}) \models Pro_y$. This holds for all $k' \geq
k$. As this holds for an arbitrary $\lambda'$, it must hold for all
possible paths from $s_k$ and so we have $(IS_T, s_k) \models
AG(Pro_y)$. Thus, we also have $(IS_T, s_k)\models \,\underset{i \in
Args}{\bigvee} AG(Pro_i)$ (1).

As (1) holds for an arbitrarily chosen path, then for all other paths in $P$ there must exist a state $s_k$ ($k \geq 0$) where (1) holds. (2)

Finally, as the proponent has a winning strategy $\sigma$, there is a corresponding strategy $f$ for $Pro$ in $IS_T$ that allows $Pro$
to select an action deterministically at each state. $P$ is the result
of applying $f$ from the initial state $s_0$ and so we have that
$out(s_0, \{f\}) = P$ (3).
From (2) and (3), it follows that $IS_T \models \llangle\{Pro\}\rrangle F \, \Big(\,\underset{i \in Args}{\bigvee} AG(Pro_i)\Big)$.
As $\lambda$ was selected arbitrarily, every other path in
$out(s_0, \{f\})$ must also be a translation of a finite dispute
$\delta \in D_T$ (3). That is,
there is a set $D_{T_\sigma} \subseteq D_T$ of finite disputes, where
each dispute in $D_{T_\sigma}$ is translated to a path in
$out(s_0, \{f\})$. The disputes in $D_{T_\sigma}$ form a subtree
$T_\sigma$, with $a$ being the root argument, where:
\begin{itemize}
    \item Every opponent node has exactly one child. An opponent node
    always has a child as the terminating argument in each dispute in
    $T_\sigma$, labelled by the proponent. An opponent node must have no more than one child as $T_\sigma$ is
    associated with $out(s_0, \{f\})$, the set of paths obtained from
    $Pro$ always selecting one action at each state.  \item Every
    dispute in $T_\sigma$ is finite. This is shown as (2).
\end{itemize}
\end{proof}
By Theorem~\ref{theor1}, we can check whether the proponent has a
grounded winning strategy in $T$, and therefore whether the argument
at the root of $T$ is acceptable according to the grounded semantics,
by model checking the corresponding ATL formula on $IS_T$.


\subsection{Admissible Winning Strategy}
\label{theo:adm}
For verifying the existence of an admissible winning strategy $\sigma$
using $IS_T$, SL is used to formalise the dispute tree
properties in Def.~\ref{def:adm_win}.
\begin{theorem} \label{theorem2}
The proponent has an admissible winning strategy $\sigma$ in $T$ iff
$IS_T \models \exists p \, \forall
e \,\Big(\phi_1 \bigwedge \phi_2 \Big)$ where
\begin{align*}
\phi_1 &=  \forall o \,G\,\Big(\bigwedge_{i \in Args} \neg G(Opp_i)\Big)\\
\phi_2 &=  \bigwedge_{i \in Args} \Big( \exists o\, F(Pro_i) \Rightarrow \,\forall o \, G(\neg Opp_i) \Big)
\end{align*}
\end{theorem}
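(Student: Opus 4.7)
The plan is to prove the biconditional by establishing a correspondence between admissible winning strategies $\sigma: Args \rightharpoonup Args$ in the dispute tree $T$ and $Pro$-strategies $f: L_{Pro} \times L_{Env} \to Act_{Pro}$ in $IS_T$. Concretely, given $\sigma$, I would define $f$ so that at any global state whose $Env$-component is of the form $(Pro, y, \_)$, $f$ plays $attack_{\sigma(y)y}$; and given $f$, I would define $\sigma$ symmetrically. The verification of the two clauses in Def.~\ref{def:adm_win} then reduces to a semantic analysis of $\phi_1$ and $\phi_2$. Unlike Thm.~\ref{theor1}, no finiteness of disputes is needed, so only the branching structure of $T_\sigma$ matters, but the $\exists o$ nested inside $\phi_2$ introduces a subtlety that must be handled carefully.

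For the forward direction ($\Longrightarrow$), I would assume $\sigma$ is an admissible winning strategy with subtree $T_\sigma$, take $f$ as above, and assign $p \mapsto f$. For $\phi_1$, I would observe that $G(Opp_i)$ can hold only at a state whose last argument $i$ is unattacked in $\mathcal{AF}$, since $P_{Pro}$ forces an attacking action whenever one is available. Because every opponent node in $T_\sigma$ has a child, $\sigma(i)$ is defined for every argument $i$ that the opponent can play as a response to $Pro$'s moves, ruling out $G(Opp_i)$ along every play induced by $p$ and an arbitrary $o$. For $\phi_2$, for each $i$, if some opponent strategy $o$ produces a play along which $Pro_i$ eventually holds, then $i$ appears as a proponent-labelled node in $T_\sigma$; by the second clause of Def.~\ref{def:adm_win}, $i$ cannot also appear as an opponent-labelled node, so no opponent strategy induces $Opp_i$ along its play under $p$, yielding $\forall o \, G(\neg Opp_i)$.

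For the backward direction ($\Longleftarrow$), I would extract from the witness to $\exists p$ a strategy $f$ for $Pro$ and set $\sigma(y) = x$ whenever $f$ at a state of the form $(\_, (Pro, y, \_))$ selects $attack_{xy}$. The subtree $T_\sigma$ then coincides with the arguments occurring along plays in $\bigcup_o out(s_0, \{f, g_o\})$ as $o$ ranges over opponent strategies. The first condition of Def.~\ref{def:adm_win} follows from $\phi_1$: an opponent node $i$ without a child in $T_\sigma$ would correspond to a play on which the opponent plays an argument unattackable by $Pro$ under $f$, forcing $G(Opp_i)$ and contradicting $\phi_1$. The second condition follows from $\phi_2$: any argument $i$ labelled by $Pro$ in $T_\sigma$ is realised by $Pro$ along some play, i.e.\ $\exists o \, F(Pro_i)$; $\phi_2$ then forces $\forall o \, G(\neg Opp_i)$, so $i$ cannot appear as an opponent-labelled node.

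The main obstacle will be managing the nested strategy quantification in $\phi_2$ and ensuring that the SL statements are faithfully translated into statements about the shape of $T_\sigma$. Specifically, I will need to justify that the opponent strategies in $IS_T$ enumerate precisely the opponent choices across branches of $T_\sigma$, and that the antecedent $\exists o \, F(Pro_i)$ and consequent $\forall o \, G(\neg Opp_i)$ can be reconciled because the outer variable $p$ (and hence $T_\sigma$) has already been fixed. A minor technicality is the $\forall e$ quantifier: since $P_{Env}$ admits only the $nothing$ action, $e$ is effectively a dummy strategy and can be discharged once at the outset, matching the pattern used implicitly in the proof of Thm.~\ref{theor1}.
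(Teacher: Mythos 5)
Your proposal is correct and follows essentially the same route as the paper's proof: both directions rest on the correspondence between $Pro$-strategies in $IS_T$ and strategies $\sigma$ in $T$ (with $e$ discharged as a dummy), with $\phi_1$ capturing ``every opponent node has a (unique) child'' and $\phi_2$ capturing ``no argument is labelled by both players.'' The only cosmetic difference is that you argue $\phi_1$ directly via the protocol forcing an attack whenever one exists, where the paper reasons by contradiction from a terminating opponent-labelled argument.
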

\begin{proof}
Note that $Pro(\delta)$ refers to the set of arguments played by the proponent in a dispute $\delta$.
\begin{description}[leftmargin=0pt]
\item Let $Str_{Pro}$, $Str_{Opp}$ be the set of strategies available to $Pro$ and $Opp$.
\item$\Longrightarrow$ Agent $Pro$ must have a strategy $f_{Pro} \in Str_{Pro}$ such that every path $\lambda = s_0, s_1,..$ in the set $out(s_0, \{f_{Pro}\})$ corresponds to a dispute $\delta \in D_{T_\sigma}$, where $s_0$ is the initial state of $IS_T$. (1)
\item The environment $Env$ only has one action available, which does not affect the evolution of the debate. This means that $Env$ only has one strategy, $f_{Env}$, where it always selects the action $nothing$. As $f_{Env}$ does not impact the evolution of the debate, the set of paths obtained from $Pro$ and $Env$ applying their strategies $f_{Pro}$ and $f_{Env}$ from $s_0$, must still be $out(s_0, \{f_{Pro}\}))$. (2)
\item Using $Var$, we define a new assignment $\chi_0$ where $\chi_0(p)$ = $f_{Pro}$ and $\chi_0(e)$ = $f_{Env}$.
\item We select an arbitrary strategy $f_{Opp} \in Str_{Opp}$ for $Opp$, map variable $o$ to $f_{Opp}$ ($\chi_1$ = $\chi_0[o \mapsto f_{Opp}]$). We now have a \textit{complete} assignment, $\chi_1$, as every agent has been assigned a strategy.
\item From $\chi_1$, we obtain the ($\chi_1$-$s_0$) play $\lambda \in out(s_0, \{f_{Pro}\})$. Then, we select a state $s_k \in \lambda$, where $k \geq 0$.
\item For each $i \in Args$:
\begin{itemize}
    \item Assume that $(IS_{T},\chi_1, s_k) \models G(Opp_i)$. This means that for all states $s_{k'} \in \lambda'$ ($k' \geq k$), where $\lambda'$ is the $\chi_1$-$s_k$ play, it holds that $(IS_{T},\chi_1, s_{k'}) \models Opp_i$. This implies that $Pro$ has no further actions available that could correspond to an attack against $i$. This suggests that the dispute $\delta \in D_{T_\sigma}$ corresponding to $\lambda$ contains a terminating argument $i$ that is labelled by the opponent (3).
    \item (3) contradicts the initial assumption that there exists an admissible winning strategy $\sigma$, as $\delta$ contains an opponent node $i$ that does not have a child.
    \item Therefore, we have that $(IS_{T},\chi_1, s_k) \models \neg G(Opp_i)$.
\end{itemize}
\item We have that $(IS_{T},\chi_1, s_k) \models \underset{i \in Args}{\bigwedge}\neg G(Opp_i)$. This must hold for all $s_k \in \lambda$ ($k \geq 0$) so that the initial assumption, that the proponent has an admissible winning strategy $\sigma$ is not contradicted. As a result, we also have that $(IS_{T},\chi_1, s_0) \models G\,\Big(\,\underset{i \in Args}{\bigwedge}\ \neg G(Opp_i)\Big)$.
\item The strategy $f_{Opp}$ was chosen arbitrarily. It must be the case that for all strategies $f_{Opp}' \in Str_{Opp}$, it holds that ($IS_T,\chi_0,[o \mapsto f_{Opp}'], s_0) \models G\,\Big(\underset{i \in Args}{\bigwedge}\ \neg G(Opp_i)\Big)$.
\item It follows that $(IS_T,\chi_0, s_0) \models \forall o \,G\,\Big(\,\underset{i \in Args}{\bigwedge}\ \neg G(Opp_i)\Big)\Big)$. (4)
\item We have assignment $\chi_0$. For each $i \in Args$:
\begin{itemize}
    \item Assume that $(IS_{T},\chi_0,s_0) \models \exists o \,F(Pro_i)$. This means that $Opp$ must have some strategy $f_{Opp} \in Str_{Opp}$ that allows this assumption to hold. We associate $Opp$'s strategy $f_{Opp} \in Str_{Opp}$ to variable $o$ and obtain an assignment $\chi_1$ ($\chi_1$ = $\chi_0[o \mapsto f_{Opp}]$). Let $\lambda \in out(s_0, \{f_{Pro}\}))$ be the $\chi_1$-$s_0$ play and let $\delta \in D_{T_\sigma}$ be the dispute corresponding to $\lambda$.
    \item There exists a state $s_k \in \lambda$ ($k \geq 0$) where it holds that $(IS_{T},\chi_2, s_k) \models Pro_i$. Accordingly, we have that the proponent plays argument $i$ in $\delta$ at some stage.
    \item There must be no dispute $\delta' \in D_{T_\sigma}$ where $i$ is labelled by the opponent, as $\sigma$ is admissible. (5)
    \item For each strategy $f_{Opp}' \in Str_{Opp}$ available to $Opp$, we obtain a new assignment $\chi_1$ from $\chi_0$ ($\chi_1$ = $\chi_0[o \mapsto f_{Opp}']$). According to (5), for the $\chi_1$-$s_0$ play $\lambda' \in  out(s_0, \{f_{Pro}\})$, there must not exist a state $s_{k'} \in \lambda$ ($k' \geq 0$) where it holds that $(IS_{T},\chi_1 ,s_{k'}) \models Opp_i$. This means that $(IS_{T}, \chi_1, s_0) \models G(\neg Opp_i)$. (6)
    \item As (6) holds for each strategy $f_{Opp}'$ available to $Opp$, it must be the case that $(IS_{T}, \chi_0 ,s_0) \models \forall o \,G(\neg Opp_i)$.
\end{itemize}
\item We have $(IS_{T},\chi_0, s_0) \models\underset{i \in Args}{\bigwedge}\exists o \, F(Pro_i) \Rightarrow \forall o \,G(\neg Opp_i)$. (7)
\item From (4) and (7), we have that $(IS_{T},\chi_0, s_0) \models \Big(\phi_1 \bigwedge \phi_2 \Big)$. (*)
\item From (1), (2), and (*) it follows that $IS_T \models \exists p \, \forall e\, \Big(\phi_1 \bigwedge \phi_2\Big)$.

\item$\Longleftarrow$ $Pro$ has some strategy $f_{Pro} \in Str_{Pro}$ and $Env$ has its only strategy $f_{Env}$, which does not affect the evolution of the system. We obtain a set of paths $out(s_0, \{f_{Pro}\})$ from the initial state $s_0$, when both $Pro$ and $Env$ apply their strategies. The set $out(s_0, \{f_{Pro}\})$ corresponds to a set of disputes $D_{T_\sigma} \subseteq D_T$.
\item We have an assignment $\chi_0$ where $\chi_0(p)$ = $f_{Pro}$ and $\chi_0(e)$ = $f_{Env}$. We have that: 
\begin{align*}
&(IS_{T},\chi_0 ,s_0) \models \forall o \, G\,\Big(\bigwedge_{i \in Args} \neg G(Opp_i)\Big)\Big) &(1)\\
&(IS_{T},\chi_0,s_0) \models \Big( \bigwedge_{i \in Args} \exists o \, F(Pro_i) \Rightarrow \, \forall o \,G(\neg Opp_i)\Big)\Big) &(2)
\end{align*}
\item Select an arbitrary strategy $f_{Opp} \in Str_{Opp}$. We obtain a new complete assignment $\chi_1$ = $\chi_0[o \mapsto f_{Opp}]$.
We now have a ($\chi_1$-$s_0$) play $\lambda \in out(s_0, \{f_{Pro}\})$. We have that $(IS_{T}, \chi_1, s_k) \models \underset{i \in Args}{\bigwedge} \neg G(Opp_i)$, for all $s_k \in \lambda$ ($k \geq 0$). This implies that for all $i \in Args$ there exists at least one state $s_{k'} \in \lambda'$, where $\lambda'$ is the $\chi_1$-$s_k$ play and $k' \geq k$, such that $(IS_{T},\chi_1, s_{k'}) \not \models Opp_i$. This suggests that $Pro$ has been able to counter attack any argument $i$ that was put forward by $Opp$ in $\lambda$. Accordingly, the dispute $\delta \in D_{T_\sigma}$ that corresponds to $\lambda$ does not contain opponent nodes with zero children. (3)
\item From (1), we know that (3) must hold for all strategies available to $Opp$. Therefore, for all plays $\lambda \in out(s_0, \{f_{Pro}\})$, their corresponding disputes $\delta \in D_{T_\sigma}$ do not contain opponent nodes with zero children (*).
\item We have assignment $\chi_0$. Select an arbitrary dispute $\delta \in  D_{T_\sigma}$. The play $\lambda \in out(s_0, \{f_{Pro}\})$ that corresponds to $\delta$ must be obtained from some complete assignment $\chi_1$ = $\chi_1[o \mapsto f_{Opp}]$, where $f_{Opp} \in Str_{Opp}$ is some strategy of $Opp$ (4).  
\item For some argument $b \in Pro(\delta)$, there must exist a state $s_k \in \lambda$ ($k \geq 0$) such that $(IS_{T},\chi_1, s_k) \models Pro_b$. This gives us $(IS_{T}, \chi_1, s_0) \models F(Pro_b)$. Given (4), we have that $(IS_{T},\chi_0,s_0) \models \exists o \, F(Pro_i)$ (5).
\item From (2) and (5), we have that $(IS_{T},\chi_0,s_0) \models \forall o \,G(\neg Opp_b)$ (6).
\item For some strategy $f_{Opp}' \in Str_{Opp}$ available to $Opp$, we obtain a new complete assignment $\chi_1$ = $\chi_0[o \mapsto f_{Opp}']$. We know that for all states $s_k$ ($k \geq 0$) in the $\chi_1$-$s_0$ play $\lambda' \in out(s_0, \{f_{Pro}\})$, it must hold that $(IS_{T},\chi_1, s_k) \models \neg Opp_b$. Accordingly, the dispute $\delta'$ that corresponds to $\lambda'$ must not have $b$ labelled by the opponent. From (6), we know that for every strategy $f_{Opp}' \in Str_{Opp}$ available to $Opp$, the $\chi_1$-$s_0$ play $\lambda'\in out(s_0, \{f_{Pro}\})$ corresponds to a dispute $\delta' \in D_{T_\sigma}$ where $b$ has not been labelled by the opponent. This means that there is no dispute $\delta' \in D_{T_\sigma}$, where the opponent labels $b$ (7).
\item As $b$ was arbitrary, (7) must hold for every argument $b \in Pro(\delta)$ (8).
\item As $\delta$ was also selected arbitrarily, (8) must hold for every dispute $\delta \in D_{T_\sigma}$ (**).
\item We have that:
\begin{itemize}
    \item Every opponent node in $T_\sigma$ has children. The number of children must be exactly one as $T_\sigma$ is associated with $out(s_0, \{f_{Pro}\})$, the set of paths obtained from $Pro$ always selecting one action at each state. This was shown in (*).
    \item No argument is labelled by both the proponent and the opponent. This was shown in (**).
\end{itemize}
Hence, $\sigma$ must be an admissible winning strategy in $T$.
\end{description}
\end{proof}

Intuitively, the formula in Theorem~\ref{theorem2} states
that \textit{Pro} has a strategy $f$ in $IS_T$ such that
\begin{enumerate}
    \item $\phi_1$ is satisfied. The subformula $\phi_1$ states that for any strategy $f'$ applied by \textit{Opp}, the resulting
    play $\lambda$ does not contain a steady state $s_k$ ($k \geq
    0$) that is associated with an argument played by $Opp$. Thus, for any $i \in Args$, $\lambda$ must not contain a
    steady state $s_k$ where a propositional atom $Opp_i$
    holds.

\item $\phi_2$ is satisfied. The subformula $\phi_2$ states that for any two plays $\lambda, \lambda' \in
    out(s_0, \{f\})$, no argument $i \in Args$ must be put forward by
    both agents. As a result, if $Pro_i$ holds at some state
    $s_k \in \lambda$ ($k \geq 0$), then there must be no state
    $s_{k'} \in \lambda'$ ($k' \geq 0$) where $Opp_i$ holds.
\end{enumerate}


\subsection{Ideal Winning Strategy}
\label{theo:ideal}
For verifying the existence of an ideal winning strategy $\sigma$
using $IS_T$, SL is used to formalise the dispute tree
properties in Def.~\ref{def:ideal_win}.
\begin{theorem} \label{theorem3}
The proponent has an ideal winning strategy $\sigma$ in $T$ iff
$IS_T \models \exists p\, \forall
e \Big(\phi_{1} \bigwedge \phi_{2} \bigwedge \phi_3 \Big)$, where
$\phi_1$ and $\phi_2$ are the same as in the statement of
Theorem~\ref{theorem2}, and 
\begin{align*}
&\phi_{3} = \forall o \,G\Big(\Big(\underset{i \in Args}{\bigvee} Opp_i \Big) \Rightarrow \neg \exists o \, \Big(\phi_{3.1}\, \mathsmaller{\mathsmaller{\bigwedge}} \, \phi_{3.2} \Big)\Big)
\end{align*} where
\begin{align*}
&\phi_{3.1} = \forall p \,G\,\Big(\bigwedge_{j \in Args} \neg G(Pro_j)\Big)\\
&\phi_{3.2} = \Big( \bigwedge_{j \in Args} \exists p \, F(Pro_j) \Rightarrow \forall p \, G(\neg Opp_j)\Big)
\end{align*}
\end{theorem}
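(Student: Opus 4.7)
The plan is to factor the proof along the two clauses of Definition~\ref{def:ideal_win}. By Theorem~\ref{theorem2}, the conjunct $\phi_1 \wedge \phi_2$ already captures that $\sigma$ is an admissible winning strategy, so the entire burden of the argument lies in showing that $\phi_3$ is equivalent to the second clause of ideality: no argument $b \in Opp(T_\sigma)$ admits an admissible winning strategy for the opponent. The key observation is that $\phi_{3.1}$ and $\phi_{3.2}$ encode the admissibility conditions of Theorem~\ref{theorem2} applied to the sub-dispute starting from a state where $Opp_b$ holds, with $Opp$ now in the role of defender and $Pro$ in the role of attacker: specifically, $\phi_{3.1}$ expresses that the chosen defender strategy of $Opp$ always responds to every attack by $Pro$, and $\phi_{3.2}$ expresses that no argument is labelled by both players in that sub-dispute.

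For the forward direction, fix an ideal winning strategy $\sigma$ with corresponding proponent strategy $f_{Pro}$, and let $\chi_0$ be the assignment mapping $p \mapsto f_{Pro}$ and $e \mapsto f_{Env}$. Theorem~\ref{theorem2} immediately gives $(IS_T, \chi_0, s_0) \models \phi_1 \wedge \phi_2$. To derive $\phi_3$, pick an arbitrary opponent strategy $f_{Opp}$, consider the resulting play $\lambda$, and any state $s_k \in \lambda$ at which $\bigvee_i Opp_i$ holds, so some $Opp_b$ is true. Suppose, for contradiction, that $\exists o\,(\phi_{3.1} \wedge \phi_{3.2})$ held at $s_k$, witnessed by some opponent strategy $f_{Opp}'$. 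Running the argument of Theorem~\ref{theorem2} mutatis mutandis with the two players swapped, $f_{Opp}'$ would induce an admissible winning strategy for $Opp$ defending $b$ in the sub-dispute beginning at $s_k$, contradicting the ideality of $\sigma$. Hence $\phi_3$ holds.

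For the backward direction, assume $IS_T \models \exists p\, \forall e\,(\phi_1 \wedge \phi_2 \wedge \phi_3)$, witnessed by a strategy $f_{Pro}$. From $\phi_1 \wedge \phi_2$, Theorem~\ref{theorem2} yields an admissible winning strategy $\sigma$ corresponding to $f_{Pro}$. If $\sigma$ were not ideal, some $b \in Opp(T_\sigma)$ would admit an admissible winning strategy $\sigma'$ for the opponent; the corresponding opponent strategy $f_{Opp}'$ would then satisfy $\phi_{3.1} \wedge \phi_{3.2}$ at every state where $Opp_b$ holds along a play reaching $b$, falsifying $\phi_3$ and yielding a contradiction. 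The main obstacle is the careful handling of the shadowed strategy quantifiers in $\phi_3$: the inner $\exists o$ rebinds $o$ and must be interpreted with respect to sub-plays starting at the current state $s_k$, and one must check that the role-swapped reading of $\phi_{3.1}$ and $\phi_{3.2}$ is sound against the semantics of $IS_T$, whose environment component already alternates turns symmetrically between the two players.
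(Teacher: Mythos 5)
Your proposal is correct and follows essentially the same route as the paper's proof: both directions hinge on reading $\phi_{3.1}\wedge\phi_{3.2}$ as the role-swapped admissibility conditions of Theorem~\ref{theorem2} applied to the sub-dispute rooted at an opponent-played argument, with the forward direction arguing by contradiction from a witness of the inner $\exists o$ and the backward direction showing that an admissible opponent strategy for some $b\in Opp(T_\sigma)$ would falsify $\phi_3$. The only cosmetic differences are that the paper unfolds the ``mutatis mutandis'' re-run of Theorem~\ref{theorem2} in full detail and separately treats the degenerate case $Opp(T_\sigma)=\emptyset$, which your contrapositive formulation covers implicitly.
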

\begin{proof}
$Pro(\delta)$ refers to the set of arguments played by the proponent in a dispute $\delta$, while $Opp(\delta)$ refers to the set of arguments played by the opponent in $\delta$. $Opp(T_\sigma)$ refers to the set of arguments played by the opponent in the tree $T_\sigma$.
\begin{description}[leftmargin=0pt]
\item Let $Str_{Pro}$, $Str_{Opp}$ be the set of strategies available to $Pro$ and $Opp$.
\item $\Longrightarrow$ Agent $Pro$ must have a strategy $f_{Pro} \in Str_{Pro}$ such that every path $\lambda = s_0, s_1,..$ in the set $out(s_0, \{f_{Pro}\})$ corresponds to a dispute $\delta \in D_{T_\sigma}$, where $s_0$ is the initial state of $IS_T$. (1)
\item The environment $Env$ only has one action available, which does not affect the evolution of the debate. This means that $Env$ only has one strategy, $f_{Env}$, where it always selects the action $nothing$. As $f_{Env}$ does not impact the evolution of the debate, the set of paths obtained from $Pro$ and $Env$ applying their strategies $f_{Pro}$ and $f_{Env}$ from $s_0$, must still be $out(s_0, \{f_{Pro}\})$. (2)
\item Using $Var$, we define a new assignment $\chi_0$ where $\chi_0(p)$ = $f_{Pro}$ and $\chi_0(e)$ = $f_{Env}$.
\item As $\sigma$ must be admissible, we have that:
\begin{align*}
&(IS_T,\chi_0 ,s_0) \models \phi_1\,\,&(3) \\
&(IS_T,\chi_0, s_0) \models \phi_2\,\, &(4)
\end{align*}
Please see the proof of Theorem \ref{theo:adm} for (3) and (4).
\item We have assignment $\chi_0$. Select an arbitrary strategy $f_{Opp} \in Str_{Opp}$.  We associate $f_{Opp}$ to variable $o$ such that we obtain assignment $\chi_1$ = $\chi_0[o \mapsto f_{Opp}]$. As every agent has now been assigned a strategy, we have an $\chi_1$-$s_0$ play $\lambda \in out(s_0, \{f_{Pro}\})$. Select an arbitrary state $s_k \in \lambda$ ($k \geq 0$).
\item  Assume that $(IS_{T},\chi_1,s_k) \models  \underset{i \in Args}{\bigvee} Opp_i$. 
\item Now, assume that $(IS_{T},\chi_1, s_k) \models Opp_i$, where $i$ is an argument in $Args$ (5).
\item Also, assume that $(IS_T, \chi_1, s_k) \models \exists o \, \Big(\phi_{3.1} \bigwedge \phi_{3.2} \Big)$ (6). 
\item From (6), we know that $Opp$ has some strategy $f_{Opp}' \in Str_{Opp}$. We now obtain a new assignment $\chi_2$ = $\chi_1[o \mapsto f_{Opp}']$. It holds that $(IS_{T},\chi_2,s_k) \models \phi_{3.1} \bigwedge \phi_{3.2}$ (7).
\item Let $out(s_k, \{f_{Opp}'\})$ be the set of paths from $s_k$, obtained from $Opp$ applying its strategy $f_{Opp}'$. The set $out(s_k, \{f_{Opp}'\})$ corresponds to a subtree $T_{\sigma'}$ of $T$ where $i$ is the root node labelled by the opponent and $\sigma'$ is a strategy of the opponent from $i$. We select some strategy $f_{Pro} \in Str_{Pro}$ for $Pro$ and assign it to the variable $p$, such that $\chi_3$  = $\chi_2[p \mapsto f_{Pro}]$.
From (7), we now have that $(IS_{T}, \chi_3, s_k) \models \phi_{3.1}$ (8).
\item Let $\lambda' \in out(s_k, \{f_{Opp}'\})$ be the $\chi_3$-$s_k$ play. According to (8), for all $s_{k'} \in \lambda'$ ($k' \geq k$), it holds that $(IS_{T},\chi_3, s_{k'}) \models \underset{j \in Args}{\bigwedge} \neg G(Pro_j)$. This implies that $Opp$ is able to counter attack any argument played by $Pro$ in $\lambda'$. Hence, the dispute $\delta' \in D_{T_{\sigma'}}$ that is associated with $\lambda'$ does not contain proponent nodes with zero children (9).
\item According to (8), (9) must hold for any strategy $f_{Pro} \in Str_{Pro}$ available to $Pro$. Therefore, all plays $\lambda' \in out(s_k, \{f_{Opp}'\})$ correspond to a dispute $\delta' \in D_{T_{\sigma'}}$, where there is no proponent node with zero children. The number of children for each proponent node must be exactly one as $T_{\sigma'}$ is associated with $out(s_k, \{f_{Opp}\})$, the set of paths obtained from $Opp$ always selecting one action at each state. (*)
\item We have assignment $\chi_2$. Select an arbitrary dispute $\delta' \in  D_{T_{\sigma'}}$. The play $\lambda' \in out(s_0, \{f_{Opp}'\})$ that corresponds to $\delta'$ must be obtained from some assignment $\chi_3$ = $\chi_2[p \mapsto f_{Pro}])$, where $f_{Pro} \in Str_{Pro}$ is some strategy of $Pro$ (10).
\item The assignment $\chi_3$ is \textit{complete}, as every agent has been assigned a strategy.
\item For some argument $b \in Pro(\delta')$, there must exist a state $s_{k'} \in \lambda'$ ($k' \geq k$) where $(IS_{T},\chi_3,s_{k'}) \models Pro_b$. This gives us $(IS_{T},\chi_3, s_k) \models F(Pro_i)$. 
\item From (10), we have $(IS_{T},\chi_2, s_k) \models \exists p \, F(Pro_b)$ (11).
\item From (11) and (7), we obtain $(IS_{T},\chi_2, s_k) \models  \forall p \, G(\neg Opp_b)$ (12). 
\item For some strategy $f_{Pro}' \in Str_{Pro}$ available to $Pro$, we obtain a new complete assignment $\chi_3$ = $\chi_2[p \mapsto f_{Pro}']$.
We know that for all states $s_{k'}$ ($k' \geq k$) in the $\chi_3$-$s_k$ play $\lambda'' \in out(s_k, \{f_{Opp}'\})$, it must hold that $(IS_{T},\chi_3, s_{k'}) \models \neg Opp_b$. Accordingly, the dispute $\delta'' \in D_{T_{\sigma'}}$ that corresponds to $\lambda''$ must not have $b$ labelled by the opponent. From (12), we know that for every strategy $f_{Pro}' \in Str_{Pro}$, the $\chi_7$-$s_k$ play $\lambda''\in out(s_k, \{f_{Opp}'\})$ corresponds to a dispute $\delta'' \in D_{T_{\sigma'}}$ where $b$ has not been labelled by the opponent. This means that there is no dispute $\delta'' \in D_{T_{\sigma'}}$, where the opponent labels $b$ (13).
\item (13) must hold for every argument $b \in Pro(\delta')$ (14).
\item As $\delta$ was also selected arbitrarily, (13) must hold for every dispute $\delta \in D_{T_{\sigma'}}$ (**). From (*) and (**), it follows that $\sigma'$ is an admissible winning strategy for an opponent node $i$ (15).
\item (15) contradicts the initial assumption that $\sigma$ is ideal as $i$ has an admissible winning strategy $\sigma'$. Therefore, the assumption made in (6) does not hold and we have that $ (IS_T$,$\chi_1, s_k) \models \neg \exists o \, \Big(\phi_{3.1} \bigwedge \phi_{3.2} \Big)$ (16).
\item The assumption in (5) can be made about any argument $i \in Args$ and would still lead to (16).
\item We have that $(IS_{T},\chi_1,s_k) \models \Big(\underset{i \in Args}{\bigvee}Opp_i\Big) \Rightarrow \neg \exists o \, \Big(\phi_{3.1} \bigwedge \phi_{3.2} \Big)$  (17).
\item (17) must hold for all $s_k \in \lambda$, where $k \geq 0$, as $s_k$ was selected arbitrarily. As a result, we have that $(IS_{T},\chi_1, s_0) \models G\,\Big(\Big(\underset{i \in Args}{\bigvee} Opp_i\Big) \Rightarrow \neg \exists o \, \Big(\phi_{3.1} \bigwedge \phi_{3.2} \Big)\Big)$ (18).
\item This implies that there is no admissible winning strategy for any of the nodes labelled by the opponent in $\delta \in D_{T_\sigma}$, the dispute that is associated with $\lambda$ (19). This means that for all other strategies $f_{Opp} \in Str_{Opp}$, (19) must hold.
\item We have $(IS_{T},\chi_0,s_0)\models \llbracket o \rrbracket (Opp, o)\,G\Big(\Big(\underset{i \in Args}{\bigvee} Opp_i\Big) \Rightarrow \neg \phi_{2}\Big)$ (20).
\item We have (3), (4), and (20). From (1) and (2), it follows that:
\begin{align*}
IS_T \models \exists p \, \forall e \, \Big(\phi_{1} \bigwedge \phi_{2} \bigwedge \phi_{3} \Big). 
\end{align*}

\item $\Longleftarrow$ $Pro$ has some strategy $f_{Pro} \in Str_{Pro}$ and $Env$ has its only strategy $f_{Env}$, which does not affect the evolution of the system. We obtain a set of paths $out(s_0, \{f_{Pro}\})$ from the initial state $s_0$, when both $Pro$ and $Env$ apply their strategies. The set $out(s_0, \{f_{Pro}\})$ corresponds to a set of disputes $D_{T_\sigma} \subseteq D_T$.
\item We have an assignment $\chi_0$ where $\chi_0(p)$ = $f_{Pro}$ and $\chi_0(e)$ = $f_{Env}$.
We have that
\begin{align*}
&(IS_{T},\chi_0, s_0) \models \phi_{1} \wedge \phi_{2}\,\, &(1)\\ &(IS_{T},\chi_0, s_0) \models \forall o \,G\Big(\Big(\underset{i \in Args}{\bigvee} Opp_i \Big) \Rightarrow \neg \exists o \, \Big(\phi_{3.1} \bigwedge \phi_{3.2} \Big)\Big)\,\, &(2)
\end{align*}
\item Let $T_\sigma$ be the dispute tree, with root $a$, constructed from the set of disputes $D_{T_\sigma}$. The proof of Theorem \ref{theo:adm} shows that if (1) holds, then $\sigma$ must be admissible (*). 
\item Consider $Opp(T_\sigma)$. There are two cases: $Opp(T_\sigma) \neq \emptyset$ (Case 1) and $Opp(T_\sigma) = \emptyset$ (Case 2).  
\item (Case 1) Select an arbitrary dispute $\delta \in D_{T_\sigma}$. The play $\lambda \in out(s_0, \{f_{Pro}\})$ that corresponds to $\delta$ must be obtained from some complete assignment $\chi_1$ = $\chi_0[o \mapsto f_{Opp}]$, where $f_{Opp}$ is some strategy available to $Opp$.
\item We select some argument $b \in Opp(\delta)$ arbitrarily. There must be a state $s_k \in \lambda$ ($k \geq 0$) such that $(IS_{T}, \chi_1, s_k) \models Opp_b$ (3). 
\item From (3), it holds that $(IS_{T}, \chi_1, s_k) \models \bigvee_{i \in Args} Opp_i$ (4).
\item From (4) and (2), we have that $(IS_{T}, \chi_1, s_k) \models \neg \phi_2$.
\item Consider a strategy $f_{Opp}' \in Str_{Opp}$. The set $out(s_k, \{f_{Opp}'\})$ corresponds to a set of disputes $D_{T_{\sigma'}}$, which can be used to construct a dispute tree $T_{\sigma'}$ where $b$ is the root and is labelled by the opponent. We now show that the opponent's strategy $\sigma'$ is not an admissible winning strategy for $b$.
\item Let $\chi_2$ = $\chi_1[o \mapsto f_{Opp}'$].
We have $(IS_{T},\,\chi_2, s_k) \not \models \phi_{3.1} \bigwedge \phi_{3.2}$.
\item This implies that one of two, or both, could hold:
\begin{align*}
&(IS_{T},\chi_2,s_k) \not \models \phi_{3.1} &(5)\\  &(IS_{T},\chi_2 ,s_k) \not \models \phi_{3.2} &(6)
\end{align*}
\item Consider the first case, which is the statement in (5). There must exist some strategy $f_{Pro}' \in Str_{Pro}$ such that 
\item $(IS_{T}, \chi_3, s_k) \not \models G\Big(\bigwedge_{j \in Args} \neg G(Pro_j)\Big)$, where $\chi_3$ = $\chi_2[p \mapsto f_{Pro}'$].
There must exist some state $s_{k'} \in \lambda'$ ($k' \geq k$), where $\lambda' \in out(s_k, \{f_{Opp}'\})$ is the $\chi_3$-$s_k$ play, such that $(IS_{T}, \chi_3, s_{k'}) \not \models \bigwedge_{j \in Args} \neg G(Pro_j)$. This implies that for some argument $j \in Args$, we have that for all $s_{k''} \in \lambda'$ ($k'' \geq k'$), $(IS_{T},\chi_3,s_{k''}) \models Pro_j$. This shows that there are no attacks available against $j$. 
\item Accordingly, there exists a dispute $\delta' \in D_{T_{\sigma'}}$, that corresponds to $\lambda'$, where $j$ is the terminating argument labelled by the proponent. As $j$ has no children, $\sigma'$ cannot be an admissible winning strategy of the opponent, for $b$ (**).
\item Consider (6). There must be some argument $j \in Args$ such that \newline $(IS_{T},\chi_2, s_k) \not \models \exists p \, F(Pro_j) \Rightarrow \forall p \, G(\neg Opp_j)$. This means that we have the following:
\begin{align*}
&(IS_{T},\chi_2,s_k) \models \exists p \, F(Pro_j)&(7) \\
&(IS_{T},\chi_2,s_k) \not \models  \forall p \, G(\neg Opp_j) &(8)
\end{align*}
\item (7) implies that there exists a strategy $f_{Pro}' \in Str_{Pro}$ such that $(IS_{T},\chi_3, s_k) \models F(Pro_j)$, where $\chi_3$ = $\chi_2[p \mapsto f_{Pro}'$].
In the $\chi_3$-$s_k$ play $\lambda' \in out(s_k, \{f_{Opp}'\})$, there must exist a state $s_{k'}$ ($k' \geq k$) such that $(IS_{T},\chi_3, s_{k'}) \models Pro_j$. Accordingly, the dispute $\delta' \in D_{T_{\sigma'}}$ that corresponds to $\lambda'$ contains an argument $j$ that is labelled by the proponent (9).
\item (8) implies that there exists a strategy $f_{Pro}'' \in Str_{Pro}$ such that $(IS_{T},\chi_3, s_k) \not \models G(\neg Opp_j)$, where $\chi_3$ = $\chi_2[p \mapsto f_{Pro}''$].
\item Therefore, there must exist a state $s_{k''} \in \lambda''$ ($k'' \geq k$), where $\lambda'' \in out(s_k, \{f_{Opp}'\})$ is a $\chi_3$-$s_k$ play, such that $(IS_{T},\chi_3,s_{k''}) \models Opp_j$. \item Accordingly, the dispute $\delta'' \in D_{T_{\sigma'}}$ that corresponds to $\lambda''$ contains an argument $j$ that is labelled by the opponent (10).
\item From (9) and (10), it follows that $\sigma'$ cannot be admissible as the same argument is labelled by both the proponent and the opponent  (***).
\item As the strategy $f_{Opp}'$ was selected arbitrarily, it must be the case that (**) and (***) hold for any $f_{Opp}' \in Str_{Opp}$ (12).
\item Similarly, $b \in Opp(\delta)$ was selected arbitrarily. This means that (12) must hold for any argument $i \in Opp(\delta)$ (13).
\item Finally, (13) must hold for any dispute $\delta \in D_{T_\sigma}$ and this implies that for any argument $i \in Opp(T_\sigma)$, the opponent does not have an admissible winning strategy for $i$ (14).
\item Consider the second case, where $Opp(T_\sigma) = \emptyset$. In this case, we find that the root argument $a$ of $T$ (and also $T_\sigma$) is unattacked. There are no opponent nodes that could have an admissible winning strategy (15).
\item We have (*), and from (14) and (15) it follows that $\sigma$ must be an ideal winning strategy for $a$ in $T$.
\end{description}
\end{proof}

Intuitively, the formula above states that \textit{Pro} has a
strategy $f$ in $IS_T$ such that
\begin{enumerate}
    \item $\phi_{1}$ and $\phi_{2}$ are satisfied. These two formulas
    are presented in Theorem \ref{theo:adm}. In particular, $f$ is
    guaranteed to be an admissible winning strategy.
    
   \item $\phi_3$ is satisfied. The subformula $\phi_3$ is used to verify that \textit{Opp} does not have a strategy $f'$ from any state $s_k \in \lambda$ ($k \geq 0$) that satisfies $Opp_i$, where $\lambda \in out(s_0, \{f\})$ and $i \in Args$, that corresponds to an admissible winning strategy for $i$.
\end{enumerate}

{\bf Discussion.} In
Theorem~\ref{theor1}, we are able to express the
existence of a grounded winning strategy in ATL. The problem of model checking an ATL formula is \textit{P-complete} \cite{Alur:2002:ATL:585265.585270}, while the problem of deciding membership of an argument to a grounded extension is in \textit{P} \cite{Dunne2009}. This suggests that it is possible to express the acceptability conditions in Def. \ref{def:grd_win} using ATL. On the other hand, we require the
strictly more expressive SL for admissible and ideal winning
strategies as deciding membership under both of these semantics is a significantly harder problem \cite{Dunne2009, ideal_complexity}. In this contribution, we do not explore whether the latter
notions can also be formalised in ATL as such a problem would require a
substantial amount of work, possibly an impossibility result, which is
beyond the scope of the present paper. We only observe that, since formulas
$\phi_2$ an $\phi_{3.2}$ refer to counterfactual situations, it is
unlikely that these can be expressed in ATL, where counterfactuals are
not readily expressible.

\section{Experimental Results}
 \label{Sec5}
\begin{table*}
    \centering
    \begin{adjustbox}{width={\textwidth},totalheight={\textheight},keepaspectratio}
    \begin{tabular}{|c|c|c|c|c|c|}
\hline
\multicolumn{2}{|c|}{\hfil\textbf{Argumentation Framework}} &  \multicolumn{3}{c|}{\textbf{MCMAS}}  &  \multicolumn{1}{c|}{\textbf{CoQuiAAS}} \\
 \hline
\textbf{Args (\textit{n})} & \textbf{Prob. attack (\textit{p})} & \textbf{Avg exec. time (s)} &\textbf{Avg reachability (s)} & \textbf{Model checking (s)} & \textbf{Avg exec. time (s)}\\ 
 \hline
\multirow{2}{2cm}{20} & $0.0 \leq p < 0.5$ &0.1742 &0.1190 &0.0131 &0.0046\\  \cline{2-6} 
 & $0.5 \leq p < 1.0$ &1.4473 &1.1280 &0.3013 &0.0048\\
 \hline
 \multirow{2}{2cm}{40} & $0.0 \leq p < 0.5$ &3.6242 &2.2740 &0.2493 &0.0049\\ \cline{2-6} 
 & $0.5 \leq p < 1.0$ &11.581 &7.7710 &3.7900 &0.0050\\
 \hline
 \multirow{2}{2cm}{60} & $0.0 \leq p < 0.5$ &617.72 &586.54 &30.291 &0.0050\\  \cline{2-6} 
 & $0.5 \leq p < 1.0$ &1532.4 &1471.3 &59.229 &0.0055\\  
 \hline
 \multirow{2}{2cm}{80} & $0.0 \leq p < 0.5$ &1683.1 &1583.6 &98.175 &0.0057\\ \cline{2-6} 
 & $0.5\leq p < 1.0$ &Timed out &Timed out &Timed out &0.0062\\ 
 \hline
\end{tabular}
\end{adjustbox}
    \caption{The performance of MCMAS compared to the performance of CoQuiAAS}
    \label{tab:grd_table}
\end{table*}

\begin{table*}
  \begin{adjustbox}{width={\textwidth},totalheight={\textheight},keepaspectratio}
    \begin{tabular}{|c|c|c|c|c|c|}
\hline
\multicolumn{2}{|c|}{\hfil\textbf{Argumentation Framework}} &  \multicolumn{3}{c|}{\textbf{MCMAS-SLK}}  &  \multicolumn{1}{c|}{\textbf{ArgSemSAT}} \\
 \hline
\textbf{Args (\textit{n})} & \textbf{Prob. attack (\textit{p})} & \textbf{Avg exec. time (s)} &\textbf{Avg reachability (s)} & \textbf{Model checking (s)} & \textbf{Avg exec. time (s)}\\ 
 \hline
\multirow{2}{2cm}{6} & $0.0 \leq p < 0.5$ &0.1700 &0.0010 &0.1866 &0.0055\\  \cline{2-6} 
 & $0.5 \leq p < 1.0$ &122.65 &0.0010 &121.39 &0.0067\\
 \hline
 \multirow{2}{2cm}{8} & $0.0 \leq p < 0.5$ &0.3470 &0.0359 &0.2493 &0.0062\\ \cline{2-6} 
 & $0.5 \leq p < 1.0$ &274.88 &0.1860 &270.42 &0.0065\\
 \hline
 \multirow{2}{2cm}{10} & $0.0 \leq p < 0.5$ &Memory error &0.1803 &Memory error &0.0071\\  \cline{2-6} 
 & $0.5 \leq p < 1.0$ &Memory error &0.5306 &Memory error &0.0076\\  \cline{2-6} 
 \hline
\end{tabular}
\end{adjustbox}
    \caption{The performance of MCMAS-SLK compared to the performance
      of ArgSemSAT}
    \label{tab:adm_table}
\end{table*}

In this section we compare the performance of MCMAS and MCMAS-SLK
\cite{CermakP2014MAmc}, a model checker that supports a fragment of
SL, to the performance of argumentation solvers in
determining the acceptability of a given argument.

The International Competition on Computational Models of Argumentation
(ICCMA) 2017 \cite{iccma} focused on several tasks involving abstract
argumentation frameworks, one of which included determining whether
some argument is acceptable under a given semantics.
For
each of the semantics, the solvers submitted were ranked according to
their performance in all tasks. The top ranking solvers for the
grounded, preferred, and ideal semantics were the open source solvers
CoQuiAAS \cite{coq2}, ArgSemSAT \cite{cerutti2014argsemsat,
argsem2017}, and Pyglaf \cite{Alviano2017ThePA} respectively.

In our experiments we used AFBenchGen2 \cite{cerutti2016generating},
an open source generator of random abstract argumentation frameworks.
Attacks are selected randomly between any two arguments $a$ and $b$
\cite{erdds1959random} by providing the generator a probability $p$ ($0 \leq p \leq 1$), which determines the likelihood of an attack from $a$ to $b$ and vice versa. 
%
For each generated framework $\mathcal{AF}$, we selected an argument
$a \in Args$ randomly and determined its acceptability. 

All tests were ran on
a machine running Linux kernel version 4.15.0-50, with 16GB RAM and an
Intel Core i7-6700 3.40GHz Processor CPU. A timeout of 1800s was set
for each test.
\subsection{Grounded Semantics}
\label{sec:grd_res}
This section compares the performance of MCMAS to that of
CoQuiAAS in determining membership of some argument in the grounded
extension. The results are presented in Table \ref{tab:grd_table}.

Using AFBenchGen2, we generated 10 argumentation frameworks per probability interval.
Each framework was automatically
translated to an interpreted
system, on which we model checked an ATL formula $\phi$
based on the argumentation framework and the formalisation in
Theorem \ref{theo:grd}.

While the results provided by MCMAS were consistent with those of
CoQuiAAS, Table \ref{tab:grd_table} shows that CoQuiAAS was
considerably faster
in determining membership
in the grounded extension. This is unsurprising as MCMAS, unlike
CoQuiAAS, is a general purpose model checker. In addition, the table
shows that on average, most of the execution time for MCMAS was spent
generating the set $G$ of reachable states, rather than performing
model checking. Computing the set of reachable states is one of the
factors that can greatly impact the performance of
MCMAS \cite{Lomuscio2017}. We observed that an increase in the number of
arguments as well as the size of the attack relation is positively correlated with the amount of time spent
by MCMAS in generating $G$. This was expected, as a larger attack relation implies that at any state, agents are
likely to have a greater number of actions enabled, resulting in an
increase in the size of $G$.
Finally, the tests that timed out from 80 arguments onward
were unable to reach the model checking step as their reachable state space was not fully generated. 

\begin{table*}
    \centering
    \begin{adjustbox}{width={\textwidth},totalheight={\textheight},keepaspectratio}
  \begin{tabular}{|c|c|c|c|c|c|}
\hline
\multicolumn{2}{|c|}{\hfil \textbf{Argumentation Framework}} &  \multicolumn{3}{c|}{ \textbf{ MCMAS-SLK}}  &  \multicolumn{1}{c|}{\textbf{ pyglaf}} \\
 \hline
 \textbf{Args (\textit{n})} & \textbf{Prob. attack (\textit{p})} & \textbf{Avg exec. time (s)} &\textbf{Avg reachability (s)} & \textbf{Model checking (s)} & \textbf{Avg exec. time (s)}\\ 
 \hline
\multirow{2}{2cm}{6} & $0.0 \leq p < 0.5$ &0.2111 &0.0010 &0.2101 &0.0570\\  \cline{2-6} 
 & $0.5 \leq p < 1.0$ &131.66 &0.0013 &130.31 &0.0600\\
 \hline
 \multirow{2}{2cm}{8} & $0.0 \leq p < 0.5$ &0.3700 &0.0210 &0.2493 &0.0590\\ \cline{2-6} 
 & $0.5 \leq p < 1.0$ &Memory error &0.1777 &Memory error &0.0633\\
 \hline
 \multirow{2}{2cm}{10} & $0.0 \leq p < 0.5$ &Memory error &0.3151 &Memory error &0.0582\\  \cline{2-6} 
 & $0.5 \leq p < 1.0$ &Memory error &0.6240 &Memory error &0.0681\\  \cline{2-6} 
\hline
\end{tabular}
\end{adjustbox}
    \caption{The performance of MCMAS-SLK compared to the performance of Pyglaf}
    \label{tab:idl_table}
\end{table*}

\subsection{Preferred Semantics}
\label{sec:pref_res}
In this section we compare the performance of MCMAS-SLK against that
of ArgSemSAT in determining acceptability of some argument under the
preferred semantics. The results are shown in
Table \ref{tab:adm_table}.

There were 10 argumentation frameworks generated per probability
interval and translated to interpreted systems, as in
Sec.~\ref{sec:grd_res}. An SL formula $\phi$, based on the
corresponding framework and the formalisation in
Theorem \ref{theo:adm}, was used to verify the existence of an
admissible winning strategy.
%
%
The results provided by MCMAS-SLK were consistent with those returned
by ArgSemSAT. As expected, ArgSemSAT was considerably faster than
MCMAS-SLK.
%
Compared to Sec.~\ref{sec:grd_res}, the argumentation frameworks
used were significantly smaller due to the higher model checking
complexity of SL\cite{CermakP2014MAmc}.
With smaller frameworks, MCMAS-SLK did not require as much time to
generate the set of reachable states. Consequently, most of the
execution time was spent model checking. For frameworks with less than
10 arguments, we found that an increase in the value of $p$ had an
significant impact on the average execution time.

Finally, we found that for most frameworks that consisted of 10 or
more arguments, MCMAS-SLK ran out of memory when verifying the
property $\phi$, rather than timing out. This may be due to the high
model checking complexity of SL as well as an increase in
the size of formula $\phi$, which is positively correlated with the
number of arguments in the framework.
\subsection{Ideal Semantics}
In this section we compare the performance of MCMAS-SLK against that
of Pyglaf in determining acceptability of some argument under the
ideal semantics.
The SL formula $\phi$ used for verifying a debate was based on the framework generated as well as the formalisation in Theorem \ref{theo:ideal}.

The results are presented in Table \ref{tab:idl_table}. 
Compared to Sec.~\ref{sec:pref_res}, we
observed more timeouts. This may be due to the size of $\phi$, which would be much larger than the SL formula used to verify membership in a preferred extension, as in Sec.~\ref{sec:pref_res}.
The formula $\phi$ would have also required more
strategy assignments compared to the formula used in the previous
section \cite{CermakP2014MAmc}.

\section{Conclusions} \label{conc}

In this paper, we developed a methodology to verify debates formalised
via abstract argumentation frameworks.  By building on previous works
on debates in abstract argumentation
\cite{dung2007computing,modgil2009proof}, in Sec.~\ref{Sec3} we
introduced a translation from debates
to interpreted systems \cite{Fagin+95b}. Then, in Sec.~\ref{Sec4} we
formalised various winning conditions as formulas in ATL
and SL\cite{MogaveroF.2014RasO}. Finally, in
Sec.~\ref{Sec5} we evaluated the performance of our approach against
state-of-the-art argumentation reasoners.
While the experimental results point to a considerable gap in
performance between the proposed model checking approach and
traditional argumentation reasoners, this is not surprising as model
checker are general purpose tools.  Nonetheless, we deem the present
contribution theoretically relevant under at least two
aspects. Firstly, we provided an automated
translation from debates to
interpreted systems amenable to formal verification. Secondly, we
captured various winning conditions in debates as formulas in
well-known logic-based languages.  We believe that these two
contributions can pave the way for a wider application of verification
techniques in Argumentation Theory, particularly in cases where
argumentation frameworks are more naturally represented as debates.

We plan to develop the results in this paper further. Particularly,
we are interested in formalising winning conditions for other
argumentation semantics, as well as improving the performance of the
verification procedures and tools.


\smallskip
{\bf Acknowledgements.} F.~Belardinelli acknowledges the support of
ANR JCJC Project SVeDaS (ANR-16-CE40-0021).




\bibliographystyle{ACM-Reference-Format}
\bibliography{refs} 


\begin{thebibliography}{24}


\ifx \showCODEN    \undefined \def \showCODEN     #1{\unskip}     \fi
\ifx \showDOI      \undefined \def \showDOI       #1{#1}\fi
\ifx \showISBNx    \undefined \def \showISBNx     #1{\unskip}     \fi
\ifx \showISBNxiii \undefined \def \showISBNxiii  #1{\unskip}     \fi
\ifx \showISSN     \undefined \def \showISSN      #1{\unskip}     \fi
\ifx \showLCCN     \undefined \def \showLCCN      #1{\unskip}     \fi
\ifx \shownote     \undefined \def \shownote      #1{#1}          \fi
\ifx \showarticletitle \undefined \def \showarticletitle #1{#1}   \fi
\ifx \showURL      \undefined \def \showURL       {\relax}        \fi
\providecommand\bibfield[2]{#2}
\providecommand\bibinfo[2]{#2}
\providecommand\natexlab[1]{#1}
\providecommand\showeprint[2][]{arXiv:#2}

\bibitem[\protect\citeauthoryear{Alur, Henzinger, and Kupferman}{Alur
  et~al\mbox{.}}{2002}]%
        {Alur:2002:ATL:585265.585270}
\bibfield{author}{\bibinfo{person}{R. Alur}, \bibinfo{person}{T.~A. Henzinger},
  {and} \bibinfo{person}{O. Kupferman}.} \bibinfo{year}{2002}\natexlab{}.
\newblock \showarticletitle{Alternating-time Temporal Logic}.
\newblock \bibinfo{journal}{\emph{J. ACM}} \bibinfo{volume}{49},
  \bibinfo{number}{5} (\bibinfo{year}{2002}), \bibinfo{pages}{672--713}.
\newblock
\showISSN{0004-5411}


\bibitem[\protect\citeauthoryear{Alviano}{Alviano}{2018}]%
        {Alviano2017ThePA}
\bibfield{author}{\bibinfo{person}{M. Alviano}.}
  \bibinfo{year}{2018}\natexlab{}.
\newblock \showarticletitle{{The Pyglaf Argumentation Reasoner}}. In
  \bibinfo{booktitle}{\emph{Technical Communications of the 33rd International
  Conference on Logic Programming (ICLP 2017)}}, Vol.~\bibinfo{volume}{58}.
  \bibinfo{publisher}{Schloss Dagstuhl--Leibniz-Zentrum fuer Informatik},
  \bibinfo{address}{Dagstuhl, Germany}, \bibinfo{pages}{2:1--2:3}.
\newblock
\showISBNx{978-3-95977-058-3}
\showISSN{2190-6807}


\bibitem[\protect\citeauthoryear{Baier and Katoen}{Baier and Katoen}{2008}]%
        {baier2008principles}
\bibfield{author}{\bibinfo{person}{C. Baier} {and} \bibinfo{person}{J.-P.
  Katoen}.} \bibinfo{year}{2008}\natexlab{}.
\newblock \bibinfo{booktitle}{\emph{Principles of model checking}}.
\newblock \bibinfo{publisher}{MIT press}.
\newblock


\bibitem[\protect\citeauthoryear{Belardinelli, Grossi, and Maudet}{Belardinelli
  et~al\mbox{.}}{2015}]%
        {belardinelli2015formal}
\bibfield{author}{\bibinfo{person}{F. Belardinelli}, \bibinfo{person}{D.
  Grossi}, {and} \bibinfo{person}{N. Maudet}.} \bibinfo{year}{2015}\natexlab{}.
\newblock \showarticletitle{Formal Analysis of Dialogues on Infinite
  Argumentation Frameworks}. In \bibinfo{booktitle}{\emph{Proceedings of the
  24th International Conference on Artificial Intelligence}}
  \emph{(\bibinfo{series}{IJCAI'15})}. \bibinfo{publisher}{AAAI Press},
  \bibinfo{pages}{861--867}.
\newblock
\showISBNx{978-1-57735-738-4}


\bibitem[\protect\citeauthoryear{{\v{C}}erm{\'a}k, Lomuscio, Mogavero, and
  Murano}{{\v{C}}erm{\'a}k et~al\mbox{.}}{2014}]%
        {CermakP2014MAmc}
\bibfield{author}{\bibinfo{person}{P. {\v{C}}erm{\'a}k}, \bibinfo{person}{A.
  Lomuscio}, \bibinfo{person}{F. Mogavero}, {and} \bibinfo{person}{A. Murano}.}
  \bibinfo{year}{2014}\natexlab{}.
\newblock \showarticletitle{MCMAS-SLK: A Model Checker for the Verification of
  Strategy Logic Specifications}. In \bibinfo{booktitle}{\emph{Computer Aided
  Verification}}. \bibinfo{publisher}{Springer International Publishing},
  \bibinfo{address}{Cham}, \bibinfo{pages}{525--532}.
\newblock


\bibitem[\protect\citeauthoryear{Cerutti, Giacomin, and Vallati}{Cerutti
  et~al\mbox{.}}{2014}]%
        {cerutti2014argsemsat}
\bibfield{author}{\bibinfo{person}{F. Cerutti}, \bibinfo{person}{M. Giacomin},
  {and} \bibinfo{person}{M. Vallati}.} \bibinfo{year}{2014}\natexlab{}.
\newblock \showarticletitle{{ArgSemSAT: Solving Argumentation Problems Using
  SAT}}.
\newblock \bibinfo{journal}{\emph{COMMA}}  \bibinfo{volume}{14}
  (\bibinfo{year}{2014}), \bibinfo{pages}{455--456}.
\newblock


\bibitem[\protect\citeauthoryear{Cerutti, Giacomin, and Vallati}{Cerutti
  et~al\mbox{.}}{2016}]%
        {cerutti2016generating}
\bibfield{author}{\bibinfo{person}{F. Cerutti}, \bibinfo{person}{M. Giacomin},
  {and} \bibinfo{person}{M. Vallati}.} \bibinfo{year}{2016}\natexlab{}.
\newblock \bibinfo{booktitle}{\emph{Generating structured argumentation
  frameworks: Afbenchgen2}}. \bibinfo{series}{Frontiers in Artificial
  Intelligence and Applications}, Vol.~\bibinfo{volume}{287}.
\newblock \bibinfo{publisher}{IOS Press}, \bibinfo{pages}{467--468}.
\newblock
\showISBNx{978-1-61499-685-9}


\bibitem[\protect\citeauthoryear{Cerutti, Giacomin, Vallati, and
  Zanetti}{Cerutti et~al\mbox{.}}{2017}]%
        {argsem2017}
\bibfield{author}{\bibinfo{person}{F. Cerutti}, \bibinfo{person}{M. Giacomin},
  \bibinfo{person}{M. Vallati}, {and} \bibinfo{person}{T. Zanetti}.}
  \bibinfo{year}{2017}\natexlab{}.
\newblock \bibinfo{title}{{ArgSemSAT-2017}}.
\newblock
  \bibinfo{howpublished}{\url{http://argumentationcompetition.org/2017/ArgSemSAT.pdf}}.
\newblock


\bibitem[\protect\citeauthoryear{Cimatti, Clarke, Giunchiglia, Giunchiglia,
  Pistore, Roveri, Sebastiani, and Tacchella}{Cimatti et~al\mbox{.}}{2002}]%
        {Cimatti+02b}
\bibfield{author}{\bibinfo{person}{A. Cimatti}, \bibinfo{person}{E.~M. Clarke},
  \bibinfo{person}{E. Giunchiglia}, \bibinfo{person}{F. Giunchiglia},
  \bibinfo{person}{M. Pistore}, \bibinfo{person}{M. Roveri},
  \bibinfo{person}{R. Sebastiani}, {and} \bibinfo{person}{A. Tacchella}.}
  \bibinfo{year}{2002}\natexlab{}.
\newblock \showarticletitle{{\sc NuSMV2}: An Open-Source Tool for Symbolic
  Model Checking}. In \bibinfo{booktitle}{\emph{Proceedings of the 14th
  International Conference on Computer Aided Verification (CAV02)}}
  \emph{(\bibinfo{series}{LNCS})}, Vol.~\bibinfo{volume}{2404}.
  \bibinfo{publisher}{Springer-Verlag}, \bibinfo{pages}{359--364}.
\newblock


\bibitem[\protect\citeauthoryear{Clarke, Grumberg, and Peled}{Clarke
  et~al\mbox{.}}{1999}]%
        {ClarkeGrumbergPeled99}
\bibfield{author}{\bibinfo{person}{E.~M. Clarke}, \bibinfo{person}{O.
  Grumberg}, {and} \bibinfo{person}{D.~A. Peled}.}
  \bibinfo{year}{1999}\natexlab{}.
\newblock \bibinfo{booktitle}{\emph{Model Checking}}.
\newblock \bibinfo{publisher}{The {MIT} Press}, \bibinfo{address}{Cambridge,
  Massachusetts}.
\newblock


\bibitem[\protect\citeauthoryear{Dung}{Dung}{1995}]%
        {dung1995acceptability}
\bibfield{author}{\bibinfo{person}{P.~M. Dung}.}
  \bibinfo{year}{1995}\natexlab{}.
\newblock \showarticletitle{On the acceptability of arguments and its
  fundamental role in nonmonotonic reasoning, logic programming and n-person
  games}.
\newblock \bibinfo{journal}{\emph{Artificial intelligence}}
  \bibinfo{volume}{77}, \bibinfo{number}{2} (\bibinfo{year}{1995}),
  \bibinfo{pages}{321--357}.
\newblock


\bibitem[\protect\citeauthoryear{Dung, Mancarella, and Toni}{Dung
  et~al\mbox{.}}{2007}]%
        {dung2007computing}
\bibfield{author}{\bibinfo{person}{P.~M. Dung}, \bibinfo{person}{P.
  Mancarella}, {and} \bibinfo{person}{F. Toni}.}
  \bibinfo{year}{2007}\natexlab{}.
\newblock \showarticletitle{Computing ideal sceptical argumentation}.
\newblock \bibinfo{journal}{\emph{Artificial Intelligence}}
  \bibinfo{volume}{171}, \bibinfo{number}{10} (\bibinfo{year}{2007}),
  \bibinfo{pages}{642--674}.
\newblock


\bibitem[\protect\citeauthoryear{Dunne}{Dunne}{2009}]%
        {ideal_complexity}
\bibfield{author}{\bibinfo{person}{P. Dunne}.} \bibinfo{year}{2009}\natexlab{}.
\newblock \showarticletitle{The computational complexity of ideal semantics}.
\newblock \bibinfo{journal}{\emph{Artif. Intell.}}  \bibinfo{volume}{173}
  (\bibinfo{date}{12} \bibinfo{year}{2009}), \bibinfo{pages}{1559--1591}.
\newblock


\bibitem[\protect\citeauthoryear{Dunne and Wooldridge}{Dunne and
  Wooldridge}{2009}]%
        {Dunne2009}
\bibfield{author}{\bibinfo{person}{P. Dunne} {and} \bibinfo{person}{M.
  Wooldridge}.} \bibinfo{year}{2009}\natexlab{}.
\newblock \showarticletitle{Complexity of Abstract Argumentation}.
\newblock In \bibinfo{booktitle}{\emph{Argumentation in Artificial
  Intelligence}}. \bibinfo{publisher}{Springer US}, \bibinfo{pages}{85--104}.
\newblock


\bibitem[\protect\citeauthoryear{Erd{\"o}s and R{\'e}nyi}{Erd{\"o}s and
  R{\'e}nyi}{1959}]%
        {erdds1959random}
\bibfield{author}{\bibinfo{person}{P. Erd{\"o}s} {and} \bibinfo{person}{A.
  R{\'e}nyi}.} \bibinfo{year}{1959}\natexlab{}.
\newblock \showarticletitle{{On random graphs I}}.
\newblock \bibinfo{journal}{\emph{Publ. Math. Debrecen}}  \bibinfo{volume}{6}
  (\bibinfo{year}{1959}), \bibinfo{pages}{290--297}.
\newblock


\bibitem[\protect\citeauthoryear{Fagin, Halpern, Moses, and Vardi}{Fagin
  et~al\mbox{.}}{1995}]%
        {Fagin+95b}
\bibfield{author}{\bibinfo{person}{R. Fagin}, \bibinfo{person}{J.~Y. Halpern},
  \bibinfo{person}{Y. Moses}, {and} \bibinfo{person}{M.~Y. Vardi}.}
  \bibinfo{year}{1995}\natexlab{}.
\newblock \bibinfo{booktitle}{\emph{Reasoning about Knowledge}}.
\newblock \bibinfo{publisher}{MIT Press}, \bibinfo{address}{Cambridge}.
\newblock


\bibitem[\protect\citeauthoryear{Grossi}{Grossi}{2010}]%
        {GrossiD.2010Otlo}
\bibfield{author}{\bibinfo{person}{D. Grossi}.}
  \bibinfo{year}{2010}\natexlab{}.
\newblock \showarticletitle{On the Logic of Argumentation Theory}. In
  \bibinfo{booktitle}{\emph{Proceedings of the 9th International Conference on
  Autonomous Agents and Multiagent Systems: Volume 1 - Volume 1}}
  \emph{(\bibinfo{series}{AAMAS '10})}. \bibinfo{publisher}{International
  Foundation for Autonomous Agents and Multiagent Systems},
  \bibinfo{address}{Richland, SC}, \bibinfo{pages}{409--416}.
\newblock
\showISBNx{978-0-9826571-1-9}


\bibitem[\protect\citeauthoryear{Group}{Group}{2015}]%
        {mcmas_manual}
\bibfield{author}{\bibinfo{person}{VAS Group}.}
  \bibinfo{year}{2015}\natexlab{}.
\newblock \bibinfo{title}{\uppercase{mcmas} v1.2.2: User Manual}.
\newblock
  \bibinfo{howpublished}{\url{https://vas.doc.ic.ac.uk/software/mcmas/manual.pdf}}.
\newblock
\newblock
\shownote{[Online. Accessed 19-11-2018].}


\bibitem[\protect\citeauthoryear{{Lagniez}, {Lonca}, and {Mailly}}{{Lagniez}
  et~al\mbox{.}}{2017}]%
        {coq2}
\bibfield{author}{\bibinfo{person}{J. {Lagniez}}, \bibinfo{person}{E. {Lonca}},
  {and} \bibinfo{person}{J. {Mailly}}.} \bibinfo{year}{2017}\natexlab{}.
\newblock \bibinfo{title}{{CoQuiAAS v2.0: Taking Benefit from Constraint
  Programming to Solve Argumentation Problems}}.
\newblock
  \bibinfo{howpublished}{\url{http://argumentationcompetition.org/2017/CoQuiAAS.pdf}}.
\newblock
\newblock
\shownote{[Online. Accessed 19-11-2018].}


\bibitem[\protect\citeauthoryear{Lomuscio, Qu, and Raimondi}{Lomuscio
  et~al\mbox{.}}{2017}]%
        {Lomuscio2017}
\bibfield{author}{\bibinfo{person}{A. Lomuscio}, \bibinfo{person}{H. Qu}, {and}
  \bibinfo{person}{F. Raimondi}.} \bibinfo{year}{2017}\natexlab{}.
\newblock \showarticletitle{{MCMAS: an open-source model checker for the
  verification of multi-agent systems}}.
\newblock \bibinfo{journal}{\emph{International Journal on Software Tools for
  Technology Transfer}} \bibinfo{volume}{19}, \bibinfo{number}{1}
  (\bibinfo{year}{2017}), \bibinfo{pages}{9--30}.
\newblock
\showISSN{1433-2787}


\bibitem[\protect\citeauthoryear{Lomuscio and Raimondi}{Lomuscio and
  Raimondi}{2006}]%
        {lomuscio2006model}
\bibfield{author}{\bibinfo{person}{A. Lomuscio} {and} \bibinfo{person}{F.
  Raimondi}.} \bibinfo{year}{2006}\natexlab{}.
\newblock \showarticletitle{Model Checking Knowledge, Strategies, and Games in
  Multi-agent Systems}. In \bibinfo{booktitle}{\emph{Proceedings of the Fifth
  International Joint Conference on Autonomous Agents and Multiagent Systems}}
  \emph{(\bibinfo{series}{AAMAS '06})}. \bibinfo{publisher}{ACM},
  \bibinfo{address}{New York, NY, USA}, \bibinfo{pages}{161--168}.
\newblock
\showISBNx{1-59593-303-4}


\bibitem[\protect\citeauthoryear{Modgil and Caminada}{Modgil and
  Caminada}{2009}]%
        {modgil2009proof}
\bibfield{author}{\bibinfo{person}{S. Modgil} {and} \bibinfo{person}{M.
  Caminada}.} \bibinfo{year}{2009}\natexlab{}.
\newblock \bibinfo{booktitle}{\emph{Proof Theories and Algorithms for Abstract
  Argumentation Frameworks}}.
\newblock \bibinfo{publisher}{Springer US}, \bibinfo{address}{Boston, MA},
  \bibinfo{pages}{105--129}.
\newblock
\showISBNx{978-0-387-98197-0}


\bibitem[\protect\citeauthoryear{Mogavero, Murano, Perelli, and Vardi}{Mogavero
  et~al\mbox{.}}{2014}]%
        {MogaveroF.2014RasO}
\bibfield{author}{\bibinfo{person}{F. Mogavero}, \bibinfo{person}{A. Murano},
  \bibinfo{person}{G. Perelli}, {and} \bibinfo{person}{M.~Y. Vardi}.}
  \bibinfo{year}{2014}\natexlab{}.
\newblock \showarticletitle{Reasoning About Strategies: On the Model-Checking
  Problem}.
\newblock \bibinfo{journal}{\emph{ACM Trans. Comput. Logic}}
  \bibinfo{volume}{15}, \bibinfo{number}{4} (\bibinfo{date}{Nov.}
  \bibinfo{year}{2014}), \bibinfo{pages}{34:1--34:47}.
\newblock
\showISSN{1529-3785}


\bibitem[\protect\citeauthoryear{Thimm}{Thimm}{2018}]%
        {iccma}
\bibfield{author}{\bibinfo{person}{M. Thimm}.} \bibinfo{year}{2018}\natexlab{}.
\newblock \bibinfo{title}{{International Competition on Computational Models of
  Argumentation (ICCMA)}}.
\newblock
  \bibinfo{howpublished}{\url{http://argumentationcompetition.org/2017/index.html}}.
\newblock


\end{thebibliography}
\end{document}